\newcommand{\C}{\mathbb{C}}
\newcommand{\ii}{i}
\newcommand{\intd}{\mspace{0.5mu}\operatorname{d}\mspace{-2.5mu}}
\newcommand{\bigoh}{\mathcal{O}}
\newcommand{\RE}{\operatorname{Re}}
\newcommand{\IM}{\operatorname{Im}}
\newcommand{\Div}{\operatorname{div}}
\newcommand{\sub}{\operatorname{sub}}
\newcommand{\range}{\operatorname{range}}
\newcommand{\pd}{pseudo-differential}
\newcommand{\strain}{\epsilon}
\newcommand{\stress}{\sigma}
\newcommand{\traction}{\tau}
\newcommand{\wavenumber}{\kappa}
\newcommand{\frequency}{\omega}
\newcommand{\asymptotic}{\sim}
\newcommand{\zitat}[1]{\citep{#1}}
\newcommand{\zitatthm}[2]{\citep[#2]{#1}}
\newcommand{\zitatau}[2]{\citet{#1}}
\newcommand{\zitatauthm}[3]{\citet[#3]{#1}}
\theoremstyle{plain}
\newtheorem{theorem}{Theorem}[section]
\newtheorem{proposition}{Proposition}[section]
\newtheorem{lemma}{Lemma}[section]
\theoremstyle{remark}
\newtheorem{remark}{Remark}[section]
\newtheorem*{ack}{Acknowledgement}
\renewcommand{\harvardurl}[1]{\textbf{URL:} \url{#1}}
\title{Subsonic Free Surface Waves in Linear Elasticity}
\author{S\"onke Hansen}
\address{Institut f\"ur Mathematik, Universit\"at Paderborn, 33095 Paderborn, Germany}
\date{\today}
\email{soenke@math.upb.de}
\subjclass[2010]{Primary: 
74J15, % surface waves
Secondary:
35Q74} % PDEs in connection with mechanics of deformable solids (MSC 2010)
\begin{document}

\begin{abstract}
For general anisotropic linear elastic solids with smooth boundaries,
Rayleigh-type surface waves are studied.
Using spectral factorizations of matrix polynomials, a self-contained exposition 
of the case of a homogeneous half-space is given first.
The main result is about inhomogeneous anisotropic bodies with curved surfaces.
The existence of subsonic free surface waves is shown by giving
ray series asymptotic expansions, including formulas for the transport equation.
\end{abstract}

\maketitle

\section{Introduction}

\zitatau{Rayleigh1887surface}{Rayleigh} discovered waves which propagate
along a plane traction-free surface of an isotropic and homogeneous elastic solid.
The surface wave speeds are subsonic, that is,
they are strictly less than the wave speeds of interior body waves.
Furthermore, the amplitudes attenuate exponentially with distance to the surface.
\zitatau{Synge57elast}{Synge} raised doubts about existence of Rayleigh-type
waves in anisotropic solids.
\zitatau{Stroh62}{Stroh} pointed out that these doubts were unfounded, and he
introduced a sextic eigenvalue problem which became useful
in the theory of free surface waves in anisotropic solids.
In the early 1970's, the existence and uniqueness problem of free surface homogeneous
plane waves in a semi-infinite half-space was settled by Barnett, Lothe, and coworkers.
For any given horizontal propagation direction they showed that there is at most one
free surface wave speed, and they gave criteria for the existence of such waves.
\zitatau{LotheBarnett76extsurfwave}{Lothe and Barnett} rederived their results by the surface impedance tensor method.
The surface impedance tensor
relates the surface displacement to the surface traction required to sustain it. 
This tensor was introduced by \zitatau{IngebTonning69}{Ingebrigtsen and Tonning}.
Detailed presentations of the existence and uniqueness results were
given by \zitatau{ChadwickSmith77}{Chadwick and Smith} and \zitatau{LotheBarnett85surfwaveimped}{Barnett and Lothe}.
Much later,
\zitatau{MielkeFu04surfwavespeed}{Mielke and Fu} simplified some proofs of the Barnett--Lothe
theory by using a Ricatti equation satisfied by the impedance tensor.
A crucial property of the tensor, the positive definiteness of its real part,
follows from an integral identity which, in the original treatments,
arises somewhat magically by averaging over rotations
in the plane spanned by the normal to the surface and the propagation direction.
Existence of subsonic free surface waves was shown by \zitatau{KamKiselev09Rayleigh}{Kamotski{\u\i} and Kiselev}
with a completely different approach based on the variational principle.

Concerning Rayleigh-type waves in inhomogeneous elastic solids with curved boundaries,
\zitatau{Petrowsky45}{Petrowsky} exhibited the following locality principle:
If a surface wave exists, the velocity of its discontinuity
at a given point must be equal to the velocity in the homogeneous elastic half-space
which is obtained by freezing the elastic parameters at that point.
This locality principle is efficiently implemented by asymptotic ray methods,
which substitute, in the high-frequency regime,
standard plane waves by geometrico-optical `plane waves'.
For inhomogeneous, isotropic elastic solids with curved boundaries,
ray methods involving sums of complex plane waves were successfully applied to the free
surface wave problem by \zitatau{Babich61Rayleigh}{Babich}, \zitatau{BabichRusakova62}{Babich and Rusakova},
and by \zitatau{KaralKeller64elasurfwave}{Karal and Keller}.
The ray method works, for smooth data, under assumptions on the geometry and on the
elastic parameters which are less restrictive than those needed for finding analytic solutions.
More importantly, salient features of high-frequency waves such as wave fronts and
amplitudes are captured directly by ray methods.
Using the existence and regularity theory of linear hyperbolic equations, as was
done by \zitatau{CourantLax56propagation}{Courant and Lax}, one can correct asymptotic solutions,
without changing the important high-frequency properties, into genuine exact solutions.
\zitatau{Gregory71Rayleigh}{Gregory} compared some analytic representations of surface waves
and corresponding ray approximations.
Rayleigh surface waves in inhomogeneous, anisotropic elastic bodies
were studied with the complex ray method by \zitatau{Nomofilov79propRayl}{Nomofilov}.
The amplitude of a geometrico-optical wave is governed by a transport equation
which, in the case of free surface waves, is quite complicated.
Efforts to solve the transport equations culminated in the work of
\zitatau{BabichKirpich04rayleigh}{Babich and Kirpichnikova}, where detailed formulas for the amplitude
and the Berry phase of a Rayleigh surface wave were obtained.

In the present paper we study Rayleigh-type surface waves in anisotropic elastic solids
with smooth surfaces and smooth inhomogeneities.
We give a self-contained presentation of the theory of Barnett and Lothe,
and we incorporate it into a ray theory.
Under the same conditions as for homogeneous half-spaces,
the existence of subsonic free surface waves is proved.
Transport equations for leading amplitudes are established in a
way which enables their numerical solution.

Rayleigh wave speeds are frequency dependent if, as happens for coated materials, 
material parameters vary significantly over one wavelength;
see \zitatau{Destrade07}{Destrade} for the study of a model case.
Our analysis does not deal with this dispersion phenomenon.
As in \zitat{BabichKirpich04rayleigh}, 
we restrict attention to the high-frequency asymptotics of Rayleigh-type surface waves 
assuming that the elastic parameters vary slowly in a (thick) surface layer.
Dispersive behaviour related to surface waves, not for Rayleigh waves but for whispering galley modes
associated with gliding rays of a scalar wave equation, has been shown quite recently
by Ivanovici, Lebeau and Planchon,
\zitat{Ivanovici12counterex} and \zitat{IvanoviciLebeauPlanchon14dispers}.

We use spectral factorizations of the acoustic tensor, which is regarded as
a second order matrix polynomial in the variable conormal to the boundary.
This allows to conveniently lump together the relevant complex eigenvalues and to
avoid some cumbersome and unnecessary considerations.
Moreover, the factorizations reduce the elastodynamic system, in the subsonic regime,
to a first order system.
In general, when the surface is curved and the solid inhomogeneous,
the first order system is not differential but \pd.
The zero traction boundary problem is transformed into a \pd\ wave equation
on the space-time boundary.
The associated principal symbol is the surface impedance tensor.
The \pd\ wave equation is only defined microlocally over the subsonic region.
Still it can be treated ray theoretically since it has the structure
of a so-called real principal type system \zitat{Dencker82polarsets}.
In the microlocal analysis literature 
such an approach was carried out for the isotropic case by \zitatau{Taylor79rayleighwaves}{Taylor},
where the subsonic region is called the elliptic region because
the theory of elliptic boundary value problems applies in the reduction to the boundary.
\zitatau{Nakamura91rayleighpulses}{Nakamura} treated the Barnett--Lothe method
from the point of view of microlocal analysis.
The present paper uses ideas developed in \zitat{SHRoehrig04lagrsol,SH11rqm,SH12imptensor}.
Although the spirit and the techniques of microlocal analysis are key to our approach,
we shall give, in order to make the contents readable for a wide audience,
a self-contained presentation except for using few basic results from \pd\ calculus.

The paper is organized as follows.
Section~\ref{section-geometrical-language} recalls the free surface traction
problem in differential geometric tensor notation.
In sections \ref{section-homog-halfspace} and \ref{sect-surface-impedance-tensor}
the subsonic free surface wave theory for homogeneous elastic half-spaces
is redeveloped using division theory of matrix polynomials as the basic tool.
In the appendix of the paper, we give a complete and self-contained presentation
of those parts of division theory which we employ.
The remaining sections deal with inhomogeneous solids with curved boundaries.
For the benefit of readers who are not familiar with \pd\ operators,
Section~\ref{sect-asymptotic-psido} contains an account of core properties
of \pd\ calculus and its relation to asymptotic expansions.
A ray theory for \pd\ systems is developed in Section~\ref{section-ray-theory}.
In Section~\ref{sect-displacement-bp} the displacement boundary problem
is solved for surface displacements concentrated in the subsonic region.
A factorization of the elastodynamic operator is constructed and used in that section.
Theorem~\ref{main-thm-exist-ssw} in Section~\ref{sect-exist-freesurfwave}
is our main result on the existence of Rayleigh-type waves.
The transport equation for the leading amplitude of subsonic free surface waves
is treated in some detail in Section~\ref{sect-transport-eqn}.

\section{Equations of linear elastodynamics}
\label{section-geometrical-language}

Let $B$ be a linearly elastic body with density $\rho>0$ and
stiffness tensor $C=[C^{ijk\ell}]$.
The elasticities satisfy the standard symmetries,
\begin{equation}
\label{elasticities-symm-def}
C^{ijk\ell}=C^{k\ell ij} =C^{jik\ell},
\end{equation}
and the strong convexity property,
\begin{equation}
\label{elasticities-strong-convex}
C^{ijk\ell}\strain_{ij}\overline{\strain_{k\ell}} >0
\quad\text{if $\strain_{ji}=\strain_{ij}$, $\strain\equiv [\strain_{ij}]\neq 0$}.
\end{equation}
(We use the summation convention, and
we denote a point, if at all, by its coordinates $x=(x^1,x^2,x^3)$.
A bar denotes complex conjugation.)
Since the elasticities are real, it suffices to assume that
\eqref{elasticities-strong-convex} holds for symmetric tensors $\strain$ which are real.
Assumptions \eqref{elasticities-symm-def} and \eqref{elasticities-strong-convex}
say that $C$ defines an inner product on symmetric $2$-tensors.

The strain tensor field $\strain$ measures the deformation of the geometry of $B$
caused by an (infinitesimal) displacement $u$.
The geometry is given by the Riemannian metric tensor $G=[G_{ij}]$.
Precisely, strain is the symmetrized covariant derivative of displacement:
\[
\strain_{k\ell}\equiv \big(u_{k;\ell} +  u_{\ell;k}\big)/2, \quad
u_{k;\ell} \equiv  u_{k,\ell} - \Gamma^j_{k\ell} u_j.
\]
Here
$\Gamma^i_{k\ell}\equiv G^{ij}(G_{\ell j,k}+G_{kj,\ell}-G_{k\ell,j})/2$
are the standard Christoffel symbols of differential geometry.
We precede a coordinate index by a comma or a semicolon to denote
a partial or a covariant derivative, respectively.
The stress field, $\stress=[\stress^{ij}]$, represents forces.
It is related to strain via Hooke's Law:
\begin{equation}
\label{eq-Hooke}
\stress^{ij}= C^{ijk\ell} \strain_{k\ell} = C^{ijk\ell} u_{k;\ell}.
\end{equation}

Isotropy is a symmetry property which some elastic bodies have.
In this case the elasticity tensor is given by
\[
C^{ijk\ell}=\lambda G^{ij}G^{k\ell}
    +\mu(G^{ik}G^{j\ell} +G^{i\ell}G^{jk}),
\]
where $\lambda$ and $\mu$ are the Lam{\'e} parameters.

The inverse of $G$ is $G^{-1}=[G^{ij}]$.
By raising and lowering indices one switches
between covariant and contravariant components, e.g.,
$u^i=G^{ik}u_k$ and $u_k=G_{kj}u^j$.
The metric $G$ defines the length element $\intd s$ by
$\intd s^2=G_{ij}\intd x^i \intd x^j$, and the volume element
$\intd V\equiv \sqrt{g}\intd x$, $g\equiv \det G$.

Let $S$ denote the (smooth) boundary surface of $B$, $\intd S$ its surface element,
and $\nu$ the interior unit normal field.
Assume that $B$ is source-free, and that $S$ is traction-free.
Then the displacement field satisfies the free surface boundary problem:
\begin{equation}
\label{eq-elastodyn-freebdry}
\rho {\ddot u}^i - \stress^{ij}_{\phantom{ij};j} = 0,
\qquad
\stress^{ij}\nu_j|_S=0,
\end{equation}
$i=1,2,3$.
The differential equations are the $3\times 3$ system of elastodynamics.
The equations~\eqref{eq-elastodyn-freebdry} arise as the Euler equations
of the Lagrangian which is the functional given by
\[
{\mathcal L}(u)
  \equiv \int\int_B (\rho {\dot u}^i\overline{{\dot u}_i}
      - \stress^{ij}\overline{\strain_{ij}})\intd V\intd t.
\]
The divergence of the stress tensor is
\begin{equation}
\label{eq-div-stress}
   \stress^{ij}_{\phantom{ij};j} 
       = \stress^{ij}_{\phantom{ij},j}
           + \Gamma^{i}_{kj} \stress^{kj} + \Gamma^{k}_{kj}\stress^{ij}
       = g^{-1/2}\big(g^{1/2} \stress^{ij}\big)_{,j} + \Gamma^{i}_{kj} \stress^{kj}.
\end{equation}
The last equality follows because $(\log g)_{,j}=2\Gamma^{k}_{kj}$.

The differential geometric formulation of elastodynamics, which we just recalled,
does not depend on the choice of a particular coordinate system.
Elasticities, strains, stresses, and displacements transform as tensors
under changes of coordinates.
No generality is lost when we assume that the surface $S$ agrees,
near some chosen point, with a coordinate plane, e.g., $x^3=0$.
In the following, we always use coordinates which are adapted to $S$ in the
following sense:
The normal coordinate $x^3$ is the signed distance to $S$ (such that $x^3<0$ in the exterior),
and the horizontal coordinates $x^1$ and $x^2$ are constant along the geodesics
which intersect the level surfaces of $x^3$ orthogonally.
The horizontal coordinates are completely determined by their restrictions to $S$.
The metric tensor satisfies
$G^{33}=1$ and $G^{3\lambda}=G^{\lambda 3}=0$ if $\lambda<3$.
The surface area element equals $\intd S=\sqrt{g}\intd x^1\intd x^2$, and
$\traction^i\equiv \stress^{i3}$ is the $i$-th component
of the traction  $\traction$ at a given level surface $x^3=$constant.

In the study of surface waves, it is useful to separate, in the
elastodynamic equations and in the formula for the traction,
differentiations in normal direction from differentiations in the
horizontal (and time) directions.
We write
\begin{align}
\label{eqDivStressInNormalCoord}
\stress^{ij}_{\phantom{ij};j}
  &= C^{i3k3}u_{k,33} +(C^{i3k\lambda}+C^{k3i\lambda})u_{k,\lambda 3} 
              +C^{i\lambda k\mu}u_{k,\lambda \mu} 
              +B^{ik\ell}u_{k,\ell} +\tilde B^{ik}u_k, \\
\label{eqTractionInNormalCoord}
\stress^{i3}
 &= C^{i3k3}u_{k,3} + C^{i3k\lambda}u_{k,\lambda} - C^{i3j\ell}\Gamma^k_{j\ell} u_k,
\end{align}
where the implied summations are restricted to $\lambda,\mu <3$.
The coefficients of the first order derivatives are
\begin{gather*}
B^{ik\ell} = C^{ijk\ell}_{\phantom{ijk\ell},j}-C^{i\ell mn}\Gamma^k_{mn} 
   + C^{jmk\ell} \Gamma^i_{jm} + C^{ijm\ell}\Gamma^{m}_{mj}.
\end{gather*}
We have no need to know the $\tilde B^{ik}$'s explicitly,
only their vanishing if the metric tensor $G$ is constant.

\section{Surface waves in homogeneous half-space}
\label{section-homog-halfspace}

In this section we assume that the elastic solid is homogeneous
and occupies a half-space in Euclidean space.
Thus the density, the elasticities, and the metric tensor are constant.
In particular, the Christoffel symbols are zero,
and covariant derivatives are ordinary derivatives.

We use adapted coordinates.
Then $x^3\geq 0$ corresponds the half-space filled by the elastic body.
The unit normal field $\nu$ at $S$, which points into the interior,
has the components $\nu_1=\nu_2=0$, $\nu_3=1$.

The equations of the elastodynamic free boundary problem are as follows:
\begin{gather}
\label{elastodyn-homog}
\rho \ddot u_i 
  - C_i^{\phantom{i}3k3}u_{k,33}
  -(C_i^{\phantom{i}3k\lambda}+C_{\phantom{k3}i}^{k3\phantom{i}\lambda})u_{k,\lambda 3} 
              -C_i^{\phantom{i}\lambda k\mu}u_{k,\lambda \mu}  =0,
\\
\label{elastodyn-homog-free-bdry}
C_i^{\phantom{i}3k3}u_{k,3} + C_i^{\phantom{i}3k\lambda}u_{k,\lambda} =0
\quad \text{at $x^3=0$.}
\end{gather}
Again summation is restricted to $\lambda,\mu <3$.
We have lowered the index $i$ in order that the elastodynamic and the traction
operators map from covariant to covariant components.

Interior and reflected plane waves are found with the time-harmonic ansatz
\[u(x,t)=\exp(\ii\wavenumber(\xi x- c t))U,\qquad \xi x\equiv\xi_jx^j,\quad \ii\equiv\sqrt{-1},\]
where $\wavenumber>0$ is the wave number, $c>0$ the wave speed,
$\xi=[\xi_j]$ the unit propagation direction,
and $\xi/c$ the slowness vector, $|\xi|=1$.
The ansatz satisfies the elastodynamic system if and only if the amplitude vector
$U=[U_k]$ lies in the null space of the acoustic tensor
$[C_i^{\phantom{i}jk\ell}\xi_j\xi_\ell -c^2\rho\delta_i^{\phantom{i}k}]$.
In the special case of isotropy there are two wave speeds,
$c_S=\sqrt{\mu/\rho}$ and $c_P=\sqrt{(\lambda+2\mu)/\rho}$,
the speeds of shear and pressure waves, respectively.
To obtain waves which propagate along the surface, the foregoing ansatz is modified
by allowing $\xi^3$ to be complex with positive imaginary part.

Let $\eta=[\eta_j]$ be a real horizontal unit vector, which means that $|\eta|=1$
and $\eta$ is orthogonal to $\nu$, i.e., $\eta_3=0$.
Consider the acoustic tensor at $\eta+s\nu$:
\[A(s)\equiv s^2 A_0 +s(A_1+A_1^T)+ A_2 -c^2\rho I.\]
Here $I=[\delta_i^{\phantom{i}k}]$ denotes the $3\times 3$ unit matrix, and
\begin{equation}
\label{defAj}
 A_0\equiv [C_i^{\phantom{i}3k3}],\quad A_1\equiv [C_i^{\phantom{i}3k\ell}\eta_\ell],
    \quad A_2\equiv [C_i^{\phantom{i}jk\ell}\eta_j\eta_\ell]
\end{equation}
are real $3\times 3$ matrices;
$A_1^T$ is the transpose of $A_1$ with respect to the inner product defined by $G$.
Because of the symmetries of the stiffness tensor $C$,
the matrices $A_0$, $A_2$, and, for real $s$, $A(s)$ are symmetric with respect to $G$.
The wave speed $c> 0$ is said to be subsonic if $A(s)$ is non-singular for every real $s$.
By the strong convexity of $C$, $A_0$ is positive definite,
and there exists a positive limiting wave speed
$c_\infty=c_\infty(\eta)$ such that $c$ is subsonic if and only if $c<c_\infty$.
Notice that $A(s)$ is positive definite for real $s$ if $c$ is subsonic.

We use the following time-harmonic ansatz to find surface waves 
which have subsonic horizontal slowness $\eta/c$:
\begin{equation}
\label{ray-term-ansatz}
u(x,t) =  \exp(\ii\wavenumber(\eta x-ct)) U(\wavenumber x^3).
\end{equation}
The amplitude $U(z)$ at $z\equiv \wavenumber x^3\geq 0$ is to be determined.
We introduce the differential operator
$DU(z)\equiv -\ii\frac{d}{dz} U(z)$.
The ansatz \eqref{ray-term-ansatz} satisfies equation \eqref{elastodyn-homog} if and only if
\begin{equation}
\label{eq-diffeq-for-ansatz}
A_0D^2U+(A_1+A_1^T)DU+(A_2-c^2\rho I)U = 0
\end{equation}
holds.

The matrix polynomial $A(s)$ satisfies the assumptions
of Proposition~\ref{prop-factor-quad-pol} in the appendix.
Therefore, there is a unique complex $3\times 3$ matrix $Q$
with spectrum in the complex upper half-plane and such that,
with $Q^*$ denoting the adjoint of $Q$,
\begin{equation}
\label{factorization}
A(s) =(sI-Q^*)A_0(sI-Q)
\end{equation}
holds for complex $s$.
Now \eqref{eq-diffeq-for-ansatz} can be rewritten as
\begin{equation}
\label{homog-halfspace-factorization}
(DI-Q^*)A_0(DI-Q)U=0. 
\end{equation}
An analogous formula holds with $Q$ replaced by a matrix $P$,
which has its spectrum in the lower complex half-plane.
The solutions of $DU=PU$ and $DU=QU$ span the space of
solutions of \eqref{eq-diffeq-for-ansatz}.
The solutions of \eqref{eq-diffeq-for-ansatz},
which stay bounded as $\wavenumber x^3\to+\infty$,
are precisely the solutions of $DU=QU$, which are given by
$U(z)=\exp(\ii z Q)U(0)$.
Moreover, as $z\to+\infty$, these solutions decay exponentially.

\begin{remark}
In the argument above, we cannot take $P=Q^*$.
In fact, if \eqref{factorization} and $A(s) =(sI-Q)A_0(sI-Q^*)$ were to hold simultaneously,
$Q$ would be normal, $QQ^*=Q^* Q$.
However, for isotropic media an explicit calculation shows that $Q$ is not normal.
\end{remark}

The surface traction of the waves \eqref{ray-term-ansatz} just constructed equals
\[\traction = \ii\wavenumber\big(A_0DU(0)+ A_1U(0)\big)=-\wavenumber Z U(0).\]
Here $Z$ is the $3\times 3$ surface impedance tensor,
\begin{equation}
\label{def-Z}
Z\equiv -\ii \big(A_0 Q + A_1\big),
\end{equation}
first introduced by \zitatau{IngebTonning69}{Ingebrigtsen and Tonning}.
Summarizing, we have found that, precisely when $ZU(0)=0$ holds,
the ansatz \eqref{ray-term-ansatz} leads to a solution
of \eqref{elastodyn-homog} and \eqref{elastodyn-homog-free-bdry},
which decays into the interior.
Since $Q$ depends smoothly on the subsonic wave speed $c$ and on $\eta$, so does $Z$.
Suppressing the dependence on $\eta$, we often write $Z(c)$ for the impedance tensor.

In the early 1970's the problem of uniqueness and existence of subsonic
surface waves in homogeneous half-space was completely solved.
The following theorem, which we reprove in the next section,
states the solution in terms of the impedance tensor.

\begin{theorem}[Barnett \& Lothe]
\label{Barnett-Lothe-Theorem}
For a given propagation direction, there exists at most one subsonic free surface wave.
A subsonic surface wave exists, if and only if $\det Z(c)<0$
holds for some $c$ in the range $0<c<c_\infty$.
The wave speed $c_R$ is the unique solution, in the range $0<c<c_\infty$,
of the secular equation $\det Z(c)=0$.
The surface displacement of a subsonic surface wave belongs to the
null-space of $Z(c_R)$.
The null-space of $Z(c_R)$ is one-dimensional and contains no non-zero real vector.
\end{theorem}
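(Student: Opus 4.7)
The plan is to reduce the theorem to three properties of the surface impedance tensor $Z(c)$, viewed as a smoothly varying family of complex $3\times 3$ matrices for $c\in[0,c_\infty)$: (i) $Z(c)$ is Hermitian; (ii) $\RE Z(c)$ is positive definite; (iii) $dZ/dc$ is negative definite as a Hermitian form. Claim (d) of the theorem is already in hand, since the argument leading to \eqref{def-Z} shows that the ansatz \eqref{ray-term-ansatz} solves \eqref{elastodyn-homog}--\eqref{elastodyn-homog-free-bdry} precisely when $Z(c)U(0)=0$.

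Property (i) follows directly from \eqref{factorization}: matching the coefficient of $s$ on both sides gives $A_0Q+Q^*A_0=-(A_1+A_1^T)$, and substitution into the adjoint of \eqref{def-Z} yields $Z^*=Z$. Property (iii) comes from differentiating in $c$ the companion constant-in-$s$ relation $Q^*A_0Q=A_2-c^2\rho I$ obtained from the same factorization: since the spectra of $Q$ and $Q^*$ lie in disjoint half-planes, the resulting Sylvester-type equation determines $dQ/dc$, and its forcing proportional to $c\rho I$ translates, after a short calculation, into negative definiteness of $dZ/dc$. Property (ii) is the main obstacle and the technical core of the Barnett--Lothe theory; I would prove it along the lines of \citet{MielkeFu04surfwavespeed}, by deriving a matrix Riccati equation for $Z(c)$ itself and combining it with positive definiteness of $Z(0)$---which is immediate from the strong convexity \eqref{elasticities-strong-convex} applied to the associated static half-space problem---to propagate positivity of $\RE Z$ uniformly across the subsonic range, whether or not $Z(c)$ happens to be singular for some $c$.

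Granted (i)--(iii), the theorem follows from elementary Hermitian perturbation theory. Let $\lambda_1(c)\geq\lambda_2(c)\geq\lambda_3(c)$ be the eigenvalues of $Z(c)$. By (iii) each $\lambda_j$ is strictly decreasing in $c$, and each is positive at $c=0$. Hence the secular determinant $\det Z(c)=\prod_j\lambda_j(c)$ vanishes at most once on $(0,c_\infty)$---namely at the value $c_R$ where $\lambda_3$ first reaches zero---and changes sign from positive to negative there. This yields the uniqueness assertion, the necessary and sufficient condition $\det Z(c)<0$ for some subsonic $c$, and the characterization of $c_R$ as the unique root of the secular equation. At $c=c_R$ only $\lambda_3$ has hit zero, so $\dim\ker Z(c_R)=1$; and any nonzero real $v\in\ker Z(c_R)$ would give $v^{T}Z(c_R)v=v^{T}(\RE Z(c_R))v>0$ by (ii), contradicting $Z(c_R)v=0$.
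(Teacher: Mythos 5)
Your reduction of the theorem to structural properties of $Z(c)$ is the same strategy as the paper, which proves Theorem~\ref{Barnett-Lothe-Theorem} as a corollary of Proposition~\ref{Z-prop}. Properties (i) and (iii) correspond to (a) and (e) of that proposition; your proof sketches for them are sound and essentially equivalent to the paper's (for (iii), you differentiate the solvency relation for $Q$ rather than the Riccati equation for $Z$, but this leads to the same Sylvester equation $Q^*\dot Z-\dot Z Q = 2\ii c\rho I$). Property (ii) is the paper's (c); there the paper proves it via an explicit integral representation, Eq.~\eqref{Z-integral-formula}, obtained from the contour-integral formula \eqref{fact-pos-sa-integral-rep} for $Q$, whereas you only gesture at a Mielke--Fu-style Riccati/propagation argument. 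That is the technical heart of the Barnett--Lothe theory and cannot be left as a plan.

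There is a genuine gap in the final deduction. From monotonicity (iii) and positivity at $c=0$ alone, $\det Z(c)=\lambda_1\lambda_2\lambda_3$ could in principle vanish up to three times as the three decreasing eigenvalues successively cross zero; your claim that it ``vanishes at most once, namely where $\lambda_3$ first reaches zero'' does not follow without the additional fact that at most one eigenvalue of $Z(c)$ can ever be non-positive. This is exactly property \eqref{Z-prop-atmost-one-nonpos-ev} of Proposition~\ref{Z-prop}, and it is a consequence of (ii): if $Z(c)$ had only one positive eigenvalue, one could pick a nonzero \emph{real} $w$ orthogonal to the positive eigenspace (space being three-dimensional), and then $w^T Z(c) w\le 0$ would contradict positive definiteness of $\RE Z(c)$. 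You use precisely this type of argument at the end to rule out real vectors in the null space of $Z(c_R)$, but you need it earlier too. Once (ii) is actually proved and this consequence stated, your elementary eigenvalue-monotonicity argument does close the proof, and in spirit it matches the paper's reasoning (the paper infers ``at most one zero'' from negativity of $\intd\det Z/\intd c$ at zeros, which amounts to the same Hermitian perturbation fact phrased via the adjugate).
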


\section{The surface impedance tensor}
\label{sect-surface-impedance-tensor}

The surface impedance tensor method of subsonic surface wave theory,
developed by \zitatau{IngebTonning69}{Ingebrigtsen and Tonning}
and \zitatau{LotheBarnett85surfwaveimped}{Barnett and Lothe},
relies on the following properties of $Z$.
\begin{proposition}
\label{Z-prop}
For $0\leq c<c_\infty$ the following hold:
\begin{compactenum}[(a)]
\item\label{Z-prop-Hermitian} $Z(c)$ is Hermitian.
\item\label{Z-prop-at-zero-posdef} $Z(0)$ is positive definite.
\item\label{Z-prop-realpart-posdef} The real part of $Z(c)$ is positive definite.
\item\label{Z-prop-atmost-one-nonpos-ev} At most one eigenvalue of $Z(c)$ is non-positive.
\item\label{Z-prop-deriv-posdef} The derivative $\intd Z(c)/\intd c$ is negative definite.
\item\label{Z-prop-limit} The limit $Z(c_\infty)\equiv\lim_{c\uparrow c_\infty} Z(c)$ exists.
\end{compactenum}
\end{proposition}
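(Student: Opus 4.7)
The plan is to derive a single Green's-type integral identity from which nearly all six properties follow. Starting from the decaying solution $U(z)=\exp(\ii z Q)U_0$ of the ODE~\eqref{eq-diffeq-for-ansatz}, I would multiply by $\overline{U}$ and integrate over $[0,\infty)$; integration by parts, with the boundary contribution identified via~\eqref{def-Z} for the surface traction, yields the master identity
\[
Z(c)U_0\cdot\overline{U_0}=\int_0^\infty\bigl[C^{ijk\ell}V_{k\ell}\overline{V_{ij}}-c^2\rho\,|U(z)|^2\bigr]\intd z,
\]
where $V_{k\ell}\equiv\ii\eta_\ell U_k+\nu_\ell U_k'$ is an effective strain gradient at the complex wavevector $\eta-\ii\nu\,d/dz$. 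In parallel, expanding the factorization~\eqref{factorization} and matching the $s^1$ coefficient gives the algebraic identity $Q^*A_0+A_0Q=-(A_1+A_1^T)$.

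Property (a) then follows immediately from $Z-Z^*=-\ii(A_0Q+Q^*A_0+A_1+A_1^T)=0$. For (b), substitute $c=0$: the integrand equals $C^{ijk\ell}\epsilon^V_{k\ell}\overline{\epsilon^V_{ij}}$ with $\epsilon^V_{k\ell}=(V_{k\ell}+V_{\ell k})/2$, which is non-negative by the strong convexity~\eqref{elasticities-strong-convex}. Vanishing of $\epsilon^V$ along $[0,\infty)$ gives $U_3'\equiv0$, so $U_3\equiv0$ by decay at infinity; then $U_j'\equiv 0$ for $j<3$, so $U_j\equiv 0$ likewise; hence $U_0=0$. Property (e) is handled by differentiating the master identity in $c$: since $U(\,\cdot\,;c)$ is a critical point of the right-hand side among functions with $U(0)=U_0$ fixed and decay at infinity, and since $\partial_cU(0;c)=0$, the envelope theorem yields
\[
\frac{\intd}{\intd c}\bigl[Z(c)U_0\cdot\overline{U_0}\bigr]=-2c\rho\int_0^\infty|U(z;c)|^2\intd z,
\]
which is strictly negative for $c>0$ and $U_0\neq0$.

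Property (c) is the main obstacle. For real $U_0$ one has $\RE(Z)U_0\cdot U_0=ZU_0\cdot\overline{U_0}$, but the master identity does not manifestly give a positive value because of the competing kinetic term. The plan here is to follow the Barnett--Lothe \emph{rotational integral identity}: rotate the pair $(\eta,\nu)$ through angle $\theta$ in the plane they span, form the family $Z(\theta,c)$, and average it over $\theta\in[0,2\pi]$; strong convexity makes this average a manifestly positive-definite real symmetric matrix, and an analyticity/continuity argument on the family transfers the positivity to each individual $\RE(Z(\theta,c))$. Property (d) is then a linear-algebra consequence of (c): a two-dimensional complex subspace of $\C^3$ on which $Z\leq 0$ would intersect $\R^3$ in real dimension at least $4+3-6=1$, producing a real $v\neq 0$ with $\RE(Z)v\cdot v\leq 0$, contradicting (c). Finally, (f) follows from monotonicity together with (d): the three real eigenvalues of $Z(c)$ are strictly decreasing in $c$ by (e), the two larger stay positive by (d), and a local analysis near $c_\infty$ shows that the eigenvalue of $Q$ which approaches the real axis does so like a square root, whence $\int|U(z;c)|^2\intd z$ blows up only as $(c_\infty-c)^{-1/2}$; this singularity is integrable, so $\intd Z/\intd c\in L^1([0,c_\infty))$ and the limit $Z(c_\infty)$ exists.
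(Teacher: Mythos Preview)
Your approach via the master energy identity is largely sound and in several places cleaner than the paper's. For (a), reading off $Q^*A_0 + A_0Q = -(A_1 + A_1^T)$ directly from the $s^1$ coefficient of~\eqref{factorization} is more direct than the paper's route through the Riccati equation and a Sylvester argument. For (e), your envelope-theorem argument is correct; the paper instead differentiates the Riccati equation to obtain the Lyapunov equation $\ii Q^*\dot Z - \ii\dot Z Q = -2c\rho I$ and writes the solution as $\dot Z = -2c\rho\int_0^\infty e^{-\ii sQ^*}e^{\ii sQ}\intd s$, which is the operator form of your integral $\int|U|^2$. Parts (b) and (d) are also correct and close to the paper's arguments.

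There is, however, a genuine gap in (c). The step ``an analyticity/continuity argument on the family transfers the positivity to each individual $\RE(Z(\theta,c))$'' is not a valid inference: positivity of an average says nothing about positivity of individual terms, and no continuity principle bridges that gap. The actual Barnett--Lothe rotational identity does not average $Z(\theta,c)$; it averages the Stroh matrix~\eqref{factor-Stroh-matrix} over $\theta$, and the resulting formula expresses $\RE Z$ itself as a manifestly positive-definite object. The paper obtains the equivalent fact by a different route: the integral representation~\eqref{fact-pos-sa-integral-rep} for $Q$ yields
\[
\ii Z \int_{-\infty}^\infty A(s)^{-1}\intd s = \pi\ii I + \int_{-\infty}^\infty (sA_0 + A_1)A(s)^{-1}\intd s,
\]
where both integrals are real matrices and the one on the left, call it $M$, is positive definite; equating imaginary parts gives $(\RE Z)M = \pi I$, hence $\RE Z = \pi M^{-1} > 0$. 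Your sketch supplies no such mechanism.

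Your argument for (f) is also incomplete. The square-root approach of an eigenvalue of $Q$ to the real axis holds only for a simple transonic state; degenerate limiting configurations (Jordan blocks, several coalescing eigenvalues) can behave differently, so your $L^1$ bound on $\dot Z$ would require a case analysis you have not given. The paper avoids this entirely: by (e) one has $Z(c)\leq Z(0)$, and since the trace of $Z(c)$ equals the trace of $\RE Z(c) > 0$, the single possibly-negative eigenvalue is bounded in modulus by the sum of the two positive ones, giving $\|Z(c)\|\leq 2\|Z(0)\|$ uniformly. Monotone and bounded then forces convergence as $c\uparrow c_\infty$, with no local analysis near the transonic limit.
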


Theorem~\ref{Barnett-Lothe-Theorem} follows from Proposition~\ref{Z-prop}
combined with arguments of the previous section.
Indeed, it follows from the proposition that the derivative of the
determinant with respect to $c$ is negative at its zeros.
Hence $\det Z(c)=0$ has at most one zero.
Since $\det Z(0)>0$, a zero $c=c_R$ exists if and only if the determinant
becomes negative for some subsonic $c$.
The assertions about the null-space of $Z(c_R)$ follow from
\eqref{Z-prop-realpart-posdef} and \eqref{Z-prop-atmost-one-nonpos-ev}.

Notice that $\det Z(c)<0$ holds if and only if $c_R<c<c_\infty$.
\zitatauthm{LotheBarnett85surfwaveimped}{Barnett and Lothe}{Theorem 12} state an existence criterion
in terms of $Z(c_\infty)$.

\begin{proof}[Proof of Proposition~\ref{Z-prop}]
Following \zitatau{MielkeFu04surfwavespeed}{Mielke and Fu}, we use the Ricatti equation
\begin{equation}
\label{Ricatti-eqn-for-Z}
(Z-\ii A_1^T)A_0^{-1}(Z+\ii A_1)= A_2-c^2\rho I.
\end{equation}
Observing that $Q=\ii A_0^{-1}(Z+\ii A_1)$,
\eqref{Ricatti-eqn-for-Z} is seen to be equivalent to
\[A_0Q^2+(A_1+A_1^T)Q+A_2-c^2\rho I=0,\]
which is the solvency equation \eqref{fact-pos-sa-solvency}
of the factorization \eqref{factorization}.

Passing to adjoints, one recognizes that \eqref{Ricatti-eqn-for-Z}
also holds when $Z$ is replaced by $Z^*$.
Subtracting the two Ricatti equations, one obtains
\[Q^*(Z-Z^*)-(Z-Z^*)Q=0.\]
Because $Q$ and $Q^*$ have disjoint spectra, this Sylvester equation is non-singular.
Hence $Z^*=Z$, proving \eqref{Z-prop-Hermitian}.

To prove \eqref{Z-prop-deriv-posdef},
differentiate \eqref{Ricatti-eqn-for-Z} with respect to $c$, and get
\[\ii Q^*\dot Z - \ii \dot Z Q =-2c\rho I, \quad \dot Z\equiv \intd Z/\intd c.\]
This Lyapunov--Sylvester equation has the unique solution
\[\dot Z= -2c\rho \int_0^\infty \exp(-\ii sQ^*)\exp(\ii sQ)\intd s,\]
which is negative definite.

Next we prove the positive definiteness of $Z(0)=[Z^{k\ell}]$.
Assume that $w$ is a complex vector such that $Z^{k\ell}w_k\overline{w_\ell}\leq 0$.
We have to show that $w=0$.
The exponentially decaying solution of \eqref{eq-diffeq-for-ansatz} with
surface displacement $e^{\ii\eta x} w$ is
\[u =  e^{\ii\eta x} \exp(\ii x^3 Q)w.\]
Denote by $\strain=[\strain_{k\ell}]$, $\strain_{k\ell}=(\partial_\ell u_k+\partial_k u_\ell)/2$,
the strain tensor and by $\traction=- Z(0)w$ the surface traction
of the (complex) displacement field $u$.
Integrate the divergence
\[
\partial_3 (C^{ijk3}\strain_{ij}\overline{u_k})
  = \partial_\ell (C^{ijk\ell}\strain_{ij}\overline{u_k})
  = C^{ijk\ell}\strain_{ij}\overline{\strain_{k\ell}}
\]
over the half-line $x^3\geq 0$, and get the energy identity
\[
\int_0^\infty C^{ijk\ell}\strain_{ij}\overline{\strain_{k\ell}}\intd x^3
  = - \traction^k \overline{w_k}.
\]
By assumption, the right-hand side is non-positive.
Hence, in view of \eqref{elasticities-strong-convex},
the strain vanishes on the half-line $x^3\geq 0$.
In particular,
\[
\strain_{11} =\ii \eta_1 u_1, \quad\strain_{12} =\ii (\eta_1 u_2+\eta_2 u_1)/2,
\quad \text{and}\quad \strain_{33} =\ii (Qu)_3\]
all vanish.
Without loss of generality, we assume that $\eta_1\neq 0$.
It follows that $u_1=u_2=0$.
Moreover,
$\partial_3 |u_3|^2 =2\RE \big(\ii (Qu)_3\overline{u_3}\big)=0$. 
Since $u_3$ tends to zero as $x^3\to\infty$, this implies $u_3=0$.
Therefore $w=0$, which proves assertion \eqref{Z-prop-at-zero-posdef}.

Denote by $A(s)$ the polynomial \eqref{factorization}.
The integral formula for $Q$, \eqref{fact-pos-sa-integral-rep} of the appendix, implies
\[\ii Z\oint_{\gamma_R} A(s)^{-1}\intd s = \oint_{\gamma_R} (sA_0+A_1) A(s)^{-1}\intd s,\]
if the closed contour $\gamma_R$ consists, with $R>0$ sufficiently large,
of the interval $[-R,R]$ and the arc $Re^{i\varphi}$, $0\leq \varphi\leq \pi$.
Notice that $A_0A(s)^{-1}=s^{-2}I + \bigoh(s^{-3})$ as $|s|\to\infty$.
Therefore, letting $R\to\infty$, we obtain
\begin{equation}
\label{Z-integral-formula}
\ii Z\int_{-\infty}^\infty A(s)^{-1}\intd s
   = \pi\ii I +\int_{-\infty}^\infty (sA_0+A_1) A(s)^{-1}\intd s,
\end{equation}
the integral on the right being a principal value integral.
The integrals are real matrices.
Moreover, the integral on the left-hand side of \eqref{Z-integral-formula}
is positive definite.
This proves the assertion \eqref{Z-prop-realpart-posdef}.

If \eqref{Z-prop-atmost-one-nonpos-ev} were not true,
then, for some velocity $c$, the impedance tensor $Z(c)=[Z^{ij}]$ would have
only one positive eigenvalue.
Furthermore, its eigenspace would be one-dimensional.
Since space is three-dimensional,
we could then choose a real vector $w\neq 0$ orthogonal to this eigenspace.
But then $Z^{ij}w_iw_j\leq 0$, contradicting the positive definiteness
of the real part of $Z(c)$.
Hence \eqref{Z-prop-atmost-one-nonpos-ev} holds.

To prove \eqref{Z-prop-limit} we first derive a bound on the operator norm $\|Z(c)\|$ of $Z(c)$.
Notice that $\|Z(c)\|$ equals the maximum of the absolute values of the eigenvalues of $Z(c)$.
By \eqref{Z-prop-deriv-posdef}, $Z(c)\leq Z(0)$ holds with respect to the usual
ordering of Hermitian matrices by the cone of positive definite matrices.
Therefore the eigenvalues of $Z(c)$ are not greater than $\|Z(0)\|$.
The sum of the eigenvalues of $Z(c)$ is positive because it is equal to the
trace of the positive definite matrix $\RE Z(c)$.
Therefore the modulus of a negative eigenvalue, if it exists, is less than the
sum of two positive eigenvalues.
Hence we have $\|Z(c)\|\leq 2\|Z(0)\|$.
By compactness, there exist limit points $Z_*$ of $Z(c)$ as $c\uparrow c_\infty$.
Using \eqref{Z-prop-deriv-posdef} again, we find that $Z_*\leq Z(c)$.
It follows that all limit points are equal, which implies assertion \eqref{Z-prop-limit}.
\end{proof}

\begin{remark}
Formula \eqref{Z-integral-formula} is a variant of the Barnett--Lothe
integral representation.
In fact, passing to the adjoint of \eqref{Z-integral-formula}, and making a
substitution $s=\tan(\phi)$ to replace the integration variable $s$ by an angle $\phi$,
one rederives the formula (2.19) of \zitatau{LotheBarnett85surfwaveimped}{Barnett and Lothe}.
\end{remark}

\begin{remark}
If the coefficients of the matrix polynomial depend smoothly on parameters,
then also the wave speed $c_R$ depends smoothly on these parameters.
Since the zeros of the determinant are simple,
this follows from the implicit function theorem.
\end{remark}

\section{Asymptotic expansions and \pd\ operators}
\label{sect-asymptotic-psido}

If the elastic body is inhomogeneous and the boundary surface curved,
we do not expect that an exact operator factorization \eqref{homog-halfspace-factorization} holds.
However, using \pd\ operators, we shall factorize the elastodynamic operator up to
negligible errors, and we shall use this to exhibit asymptotic subsonic surface waves.
This will be detailed in the following sections.
As a preparation, we summarize standard results on asymptotic expansions and on \pd\ operators.
Refer to \zitatau{Hormander65psido}{H{\"o}rmander}, \zitatauthm{Hormander85anaThree}{H{\"o}rmander}{18.1},
or \zitatau{AlinhacGerard07psido}{Alinhac and G{\'e}rard} for expositions of the basic \pd\ calculus.

We consider vector-valued functions $u(x;\frequency)$ of points $x$ in $d$-dimensional space
which oscillate rapidly as the frequency parameter $\frequency$ tends to $+\infty$.
More precisely,
\begin{equation}
\label{u-asympt-ansatz}
u(x;\frequency) \asymptotic e^{\ii\frequency\theta(x)}
      \sum\nolimits _{k=0}^\infty (\ii\frequency)^{-k} U_{-k}(x),
\end{equation}
where the series is an asymptotic series in the space of smooth functions $x$.
(Often we simply write $=$ instead of $\asymptotic$,
although the equation may be true only in the sense of asymptotic expansions.)
The phase function $\theta(x)$ is real-valued with derivative $\theta'(x)\neq 0$.
We use the operators $D_j\equiv -\ii \partial_j$, where
$\partial_j\equiv \partial/\partial x^j$ denotes partial derivative with
respect to the $j$-th coordinate, $x^j$.
By the Leibniz' product rule, $f\equiv D_j u$ has the asymptotic expansion
\begin{equation}
\label{FundAsymptExpans}
f(x;\frequency) \asymptotic \frequency^m
   e^{\ii\frequency\theta(x)}\sum\nolimits _{k=0}^\infty (\ii\frequency)^{-k} F_{-k}(x)
\end{equation}
with $m=1$ and top order coefficient $F_0(x)=\partial_j\theta(x)U_0(x)$.
A linear differential operator order $m$, $\hat A$,
is a polynomial in $D_j$ of order $m$ with smooth coefficients.
Also $f\equiv \hat A u$ has an expansion \eqref{FundAsymptExpans}.

The \pd\ calculus assigns to certain functions $A(x,\xi)$, which are
defined on $2d$-dimensional phase space, operators $\hat A=A(x,D)$ by
\begin{equation}
\label{KohnNirenberg-quantiz}
(\hat Au)(x)=(2\pi)^{-d} \iint e^{i(x-y)\cdot\xi} A(x,\xi)u(y)\intd y \intd \xi.
\end{equation}
Here $x$ and $\xi$ denote position and (generalized) momentum variables, respectively.
By the Fourier inversion formula, $\hat A$ is the identity when $A=1$.
The function $A(x,\xi)$ is called the (full) symbol of the operator $\hat A$.
We assume that the symbol $A$ belongs to a standard class $S^m=S^m_{1,0}$
of symbols of order at most $m$.
Moreover, the symbol $A$  admits, as $|\xi|\to\infty$,
an asymptotic expansion in homogeneous functions:
\[
  A(x,\xi)\asymptotic\sum \nolimits _{j=0}^\infty A_{-j}(x,\xi), 
  \quad\text{$A_{-j}(x,t\xi)=t^{m-j}A_{-j}(x,\xi)$ for $t>0$.}
\]
(Strictly speaking, homogeneous functions are, unless they are polynomials, not symbols. 
This technicality is overcome by multiplying with cutoff functions
which insure smoothness of symbols at $\xi=0$.)
The symbol $A(x,\xi)$ and the associated operator $\hat A$ are said to be
of order $m$ if the principal symbol $A_0(x,\xi)$ is not identically zero.
Symbols may be matrix-valued.
Rather than using $A_{-1}(x,\xi)$ it is preferable to work
with the subprincipal symbol $A_{\sub}(x,\xi)$,
\begin{equation}
\label{def-Asub}
A_{\sub}\equiv A_{-1} -\frac{1}{2i}\sum\nolimits _j \partial^2 A_0/\partial x^j\partial\xi_j.
\end{equation}
A \pd\ operator is a differential operator if and only if its
symbol is a polynomial in the $\xi$ variable.
In particular, the symbol of $D_j$ is $\xi_j$.

Every \pd\ operator of order $m$ has the property that it maps an
asymptotic sum \eqref{u-asympt-ansatz} into an asymptotic sum \eqref{FundAsymptExpans}.
Moreover, this property is characteristic of \pd\ operators; see \zitatau{Hormander65psido}{H{\"o}rmander}.
Since we need explicit expressions for $F_0$ and $F_{-1}$ we indicate how
\eqref{FundAsymptExpans} is proved using the method of stationary phase.
The method is applied to the summands of \eqref{u-asympt-ansatz}:
\[
e^{-\ii\frequency\theta(x)} (\hat A Ue^{\ii\frequency\theta})(x)
   =(\frequency/2\pi)^{d} \iint
       e^{\ii\frequency(x-y)\cdot\xi-\ii\frequency(\theta(x)-\theta(y))}
          A(x,\frequency\xi)U(y)\intd y \intd \xi.
\]
The stationary point of the phase
\[\Phi\equiv(x-y)\cdot\xi-(\theta(x)-\theta(y)),\quad \Phi_y'=0=\Phi_{\xi}',\]
is at $y=x$, $\xi=\theta'(x)$.
The inverse of the Hessian matrix $H$ of $\Phi$ at the stationary point is
the $2d\times 2d$-matrix
\[
{H(x)}^{-1}=-\begin{bmatrix} 0&I\\ I& \theta''(x)\end{bmatrix},
\quad\text{where $H(x)\equiv \Phi''|_{y=x,\xi=\theta'(x)}$.}
\]
On a formal level, the stationary phase expansion is given by
\[
e^{-\ii\frequency\theta(x)} (\hat A Ue^{\ii\frequency\theta})(x)
  \asymptotic \exp\bigg((\ii \frequency)^{-1}\big(\partial_y\cdot\partial_\xi
      +\frac 1 2 \theta''\partial_y\cdot\partial_y\big)\bigg) (e^{\ii\frequency\rho} U),
\]
where the exponential of differential operators has to be replaced by its formal
Taylor series to give the asymptotic expansion
\zitatthm{Hormander90anaOne}{Theorem 7.7.5}.
Here $\rho$ is the remainder of the second order Taylor expansion
of the phase, it vanishes to third order at the stationary point
where the expressions have to be evaluated.
Thus the principal coefficients in \eqref{FundAsymptExpans} are
\[F_0(x)=A_0(x,\theta'(x))U_0(x),\]
and, suppressing the arguments $x$ and $\xi=\theta'(x)$ in writing,
\[
F_{-1} = \sum_j \frac{\partial A_0}{\partial\xi_j} \frac{\partial U_{0}}{\partial x^j}
            +\frac 1 2 \bigg(\sum_{j,\ell} \frac{\partial^2\theta}{\partial x^j \partial x^\ell}
                       \frac{\partial^2 A_0}{\partial \xi_j \partial \xi_\ell}\bigg) U_{0} 
          +iA_{-1} U_{0} + A_0 U_{-1}.
\]
Observe that the double sum enclosed in parenthesis equals
\[
\sum_{\ell} \frac{\partial}{\partial x^\ell}
   \bigg(\frac{\partial A_0}{\partial \xi_\ell}(x,\theta'(x))\bigg)
  - \sum_{j,\ell} \frac{\partial^2 A_0}{\partial \xi_j \partial x^\ell}(x,\theta'(x)).
\]
Therefore,
\begin{equation}
\label{F-minus-Eins}
F_{-1}= \sum_j \frac{\partial A_0}{\partial\xi_j} \frac{\partial U_0}{\partial x^j}
            +\frac 1 2
\sum_{\ell} \frac{\partial}{\partial x^\ell} \bigg(\frac{\partial A_0}{\partial \xi_\ell}\bigg)U_0
          +iA_{\sub} U_0 +A_0 U_{-1}.
\end{equation}
For $k=2,3,\dots$ there are formulas for $F_{-k}$ which differ from the formula
for $F_{-1}$ by a shifted index and by an additional term which is a sum 
of (derivatives of) $U_{-j}$ with $j<k-1$.
The additional term arises from the higher order terms in the stationary phase expansion.
We emphasize that the asymptotic expansions and the formulas for the
coefficients $F_{-k}$ hold for a general system $\hat A$ of \pd\ operators.

The composition $\hat C=\hat B\hat A$ of \pd\ operators $\hat A$ and $\hat B$
is again a \pd\ operator.
There is a formula for the asymptotic expansion of the full symbol $C(x,\xi)$.
On the principal symbol level, symbols multiply: $C_0=B_0A_0$.
On the next level,
\begin{equation}
\label{eq-psido-compos-sub}
   C_{-1} =B_0A_{-1}+B_{-1}A_0-i\sum\nolimits _j (\partial_{\xi_j} B_0)(\partial_{x^j} A_0)
\end{equation}
holds.
Define the Poisson bracket
\[\{P,Q\}\equiv \sum\nolimits _j 
  (\partial P/\partial\xi_j)(\partial Q/\partial x^j)
 -(\partial P/\partial x^j)(\partial Q/\partial \xi_j),
\]
where the symbols need not be scalar but can take square matrices
(of equal dimensions) as their values.
It follows that the composition of operators is given on the principal and
on the subprincipal level by
\begin{equation}
\label{psido-symbol-compo}
 C_0=B_0A_0,
   \quad C_{\sub} =B_0A_{\sub}+B_{\sub}A_0+\frac{1}{2i}\{B_0,A_0\}.
\end{equation}
The (formal) adjoint of $\hat A$ with respect to a given scalar product
is also a \pd\ operator, $\hat B$, and its principal and subprincipal symbol are given as follows:
\[B_0=A_0^*, \quad B_{\sub} = A_{\sub}^*. \]
The star denotes the (Hermitian) adjoint.
(The symbol formulas arise more naturally if Weyl quantization is used
instead of the Kohn--Nirenberg quantization, $A\mapsto\hat A$ as
in \eqref{KohnNirenberg-quantiz}, which we are using here.)

Under ellipticity assumptions,
one constructs inverses, square roots, and powers of \pd\ operators
which are again \pd\ operators.
The error or remainder terms in such constructions are typically negligible operators
with symbols belonging to $S^{-\infty}\equiv\cap_m S^m$.
In particular, two \pd\ operators with symbols, which are equal when $|\xi|>1$,
differ only by a negligible operator.
Negligible operators map distributions into smooth functions.
If $u$ has an symptotic expansion \eqref{u-asympt-ansatz} and if $\hat A$
is a \pd\ operator which is negligible in conic neighbourhood of the
set of $(x,\frequency\theta'(x))$, $\frequency>0$, then the asymptotic expansion
of $f=\hat A u$ is trivial, i.e., all coefficients $F_{-k}=0$.

The operator $\hat L$ of elastodynamics is a second order differential $3\times 3$ system,
\begin{equation}
\label{defOpElastodyn}
(\hat L u)_i\equiv \rho {\ddot u}_i - \stress^{\phantom{i}j}_{i\phantom{j};j}. 
\end{equation}
The principal symbol $L_0$ of $\hat L$ is the acoustic tensor,
\begin{equation}
\label{princsymbElastodynOp}
L_0(x,\xi)=\big[C_i^{\phantom{i}jk\ell}(x)\xi_j\xi_\ell-\rho\xi_0^2\delta_i^{\phantom{i}k}(x)\big].
\end{equation}
Here we included the time coordinate as $x^0=t$, and the dual variable (frequency) as $\xi_0$.
The implied summation extends over the spatial indices, i.e., $j,\ell \geq 1$.
The \pd\ operator which will be most important to us is the surface impedance
operator, which, up to negligible error terms,
maps surface displacements to surface tractions of solutions of $\hat L u=0$.
This operator is introduced in Section~\ref{sect-exist-freesurfwave}.

\section{Ray theory for systems}
\label{section-ray-theory}

The ray method solves a wave equation $\hat Au=0$
by solving the equations $F_{-j}=0$ for the amplitudes $U_{-k}$
of an asymptotic expansion \eqref{u-asympt-ansatz} of $u$.
For $j>0$, the equations $F_{-j}=0$ are ordinary differential equations along rays,
commonly called transport equations.

The equation $F_0=0$ is the dispersion equation, $A_0(x,\theta'(x))U_0(x)=0$.
If $A_0$ is scalar, then this becomes the eikonal equation $A_0(x,\theta'(x))=0$.
Furthermore, $F_{-1}=0$ with \eqref{F-minus-Eins} is a transport equation for $U_0$
which is solved by the method of characteristics.
If $A_0$ is not scalar, then it is not obvious how \eqref{F-minus-Eins} and
$F_{-1}=0$ lead to a useful transport equation for $U_0$.

For systems of real principal type, introduced by \zitatau{Dencker82polarsets}{Dencker},
there is an efficient ray theory based on the theory of Fourier integral operators.
It applies to isotropic elastodynamics, \zitatau{SHRoehrig04lagrsol}{Hansen and R{\"o}hrig}, and, as we will show,
to the propagation of Rayleigh waves along curved surfaces of general elastic media.
We adapt this ray method to our setting.

Let $\hat A$ be a (square) system of \pd\ operators with principal symbol $A_0(x,\xi)$.
The set or zeros of the determinant $\det A_0$ is called the characteristic set of
$\hat A$ (or of $A_0$).
We assume that the system $\hat A$ is of real principal type in the sense
of \zitatau{Dencker82polarsets}{Dencker}.
This means that there exists a matrix-valued symbol $B(x,\xi)$,
homogeneous of degree zero in the $\xi$ variable,
and a scalar symbol $a(x,\xi)$ such that
$B(x,\xi)A_0(x,\xi)=a(x,\xi)I$ holds with $I$ denoting the unit matrix.
Moreover, $a$ is real-valued and its set of zeros equals the
characteristic set, on which it vanishes simply,
i.e.\ $\partial_\xi a(x,\xi)\neq 0$ holds whenever $a(x,\xi)=0$.
We call $a$ a Hamilton function of $A_0$ (and of the operator $\hat A$).
In the special case where the determinant of $A_0$ vanishes simply on the characteristic set,
$B$ is a scalar multiple of the cofactor matrix,
and $ab=\det A_0$ with $b$ a nowhere vanishing function.

Next we collect some properties which follow from the
real principal type assumption \zitat{Dencker82polarsets}.
The simple vanishing of $a$ implies that the complement of the
characteristic set is dense in phase space.
Moreover, $BA_0=aI$ and also $A_0B=aI$ holds.
This is clear where $a\neq 0$, and, by continuity, this holds in general.
The following identities between range spaces and null spaces (kernels)
are important:
\begin{equation}
\label{range-eq-kernel}
\range(B)=\ker(A_0),\quad \range(A_0)=\ker(B),\quad \text{where $a=0$.}
\end{equation}
The inclusion of the ranges in the null spaces is clear.
Differentiating $BA_0=aI=A_0B$ in an appropriate direction, with the
directional derivative denoted by a prime, we obtain
\[BA_0'+B'A_0=a'I=A_0'B+A_0B', \quad a'\neq 0.\]
Now, equality in \eqref{range-eq-kernel} follows from rank considerations.

Fix a real-valued solution $\theta(x)$ of the Hamilton--Jacobi equation $a(x,\theta'(x))=0$.
The integral curves $(x(s),\xi(s))$ of Hamilton's canonical equations,
$\dot x=\partial_\xi a(x,\xi)$ and $\dot\xi=-\partial_x a(x,\xi)$,
satisfy $\xi(s)=\theta'(x(s))$ for all parameters $s$ if this is true for at least one $s$.
These integral curves in phase space are called bicharacteristics.
The Poisson bracket $\{a,b\}$ is the derivative of
a function $b(x,\xi)$ in bicharacteristic direction,
\[\frac{\intd}{\intd s} b(x(s),\xi(s))=\{a,b\}(x(s),\xi(s)).\]
The projections of bicharacteristics to space-time
are the space-time rays $x(s)$ associated with $\theta$.
The vector field $V(x)\equiv \partial_\xi a(x,\theta'(x))$ is called the ray field.
We denote by a dot the derivative in ray direction,
\[\dot U(x(s))\equiv (\intd/\intd s)U(x(s))=(V\cdot\partial_x U)(x(s)).\]

Next, we turn to the solution of $F_0=0$ and $F_{-1}=0$.
Again we suppress arguments $x$ and $\xi$, and we assume evaluation
of expressions at $\xi=\theta'(x)$.
If $A_0U_0=0$ holds, then the leading term in the asymptotic expansion
of $\hat B f=\hat B\hat A u$ equals
\[BF_{-1} = \dot U_0+ (\Div  V/2)U_0 + i A_{\sub}'U_0,\]
where $\Div V\equiv \nabla_x\cdot V$ is the divergence of the ray field,
and $A_{\sub}'$ denotes the subprincipal symbol of $\hat A'\equiv \hat B\hat A$.
To see this, observe that the right-hand equals the right-hand side in \eqref{F-minus-Eins}
when $\hat A$ is replaced by $\hat A'$.
Observe that $A_0'=aI$, and, by \eqref{psido-symbol-compo},
$A_{\sub}'=BA_{\sub}+\frac{1}{2i}\{B,A_0\}$ when restricted to the nullspace of $A_0$.
Therefore, 
\[BF_{-1} = \dot U_0+ (\Div  V/2)U_0 +\frac{1}{2}\{B,A_0\}U_0 + iB A_{\sub}U_0.\]
To solve $F_{-1}=0$ we first solve, using the following lemma,
the equations $BF_{-1}=0$ and $A_0U_0=0$, simultaneously.
\begin{lemma}
\label{lemma-soln-transport-eqn}
If $U$ solves the differential equation
\begin{equation}
\label{general-transport-equation}
\dot U+ (\Div  V/2)U +\frac 1 2\{B,A_0\} U + BW=0,
\end{equation}
then $A_0U=0$ holds along a ray if this holds in at least one point of the ray.
\end{lemma}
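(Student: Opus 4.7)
The plan is to set $W_0(s) := A_0(x(s),\theta'(x(s)))\, U(x(s))$ and derive a closed linear homogeneous ODE for $W_0$ along the ray; uniqueness of ODE solutions then gives $W_0 \equiv 0$ from the vanishing of $W_0$ at a single point, which is the assertion of the lemma.

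First, since $A_0$ is evaluated along the projected bicharacteristic of the Hamilton function $a$, its ray derivative is the Poisson bracket, and so $\dot W_0 = \{a, A_0\}\, U + A_0 \dot U$. Next, multiply the transport equation \eqref{general-transport-equation} on the left by $A_0$; the inhomogeneous term $A_0 B W = aW$ vanishes along the ray, because $\theta$ satisfies $a(x,\theta'(x))=0$. Using the pointwise identity $A_0\, \partial B = (\partial a)\, I - (\partial A_0)\, B$, obtained by differentiating $A_0 B = aI$ with respect to either $x^j$ or $\xi_j$, the composition $A_0\{B, A_0\}$ simplifies to $\{a, A_0\} - \tilde M$, where
\[ \tilde M := \sum\nolimits_j (\partial_{\xi_j} A_0)\, B\, (\partial_{x^j} A_0) - (\partial_{x^j} A_0)\, B\, (\partial_{\xi_j} A_0). \]
Combining the pieces yields
\[ \dot W_0 + (\Div V / 2)\, W_0 = \tfrac{1}{2}\bigl(\{a, A_0\} + \tilde M\bigr) U. \]

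At this stage the right-hand side is apparently linear in $U$ rather than in $W_0 = A_0 U$, so the ODE has not yet closed. The resolution is to invoke the companion identity $B\, \partial A_0 = (\partial a)\, I - (\partial B)\, A_0$, obtained from the other factorization $BA_0 = aI$. Substituting this inside $\tilde M U$ produces two groups of terms: scalar contributions proportional to $(\partial a)\, U$ that assemble into precisely $-\{a,A_0\}\, U$ and cancel against the first piece on the right, together with terms that carry an $A_0$ on the right and therefore combine into $K\, W_0$ for an explicit matrix $K(s)$. The closed linear homogeneous equation
\[ \dot W_0 + \bigl(\Div V / 2 - K/2\bigr)\, W_0 = 0 \]
follows, and the lemma is immediate.

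The main obstacle is exactly this closure step. The $U$-linear terms on the right-hand side do not vanish individually, and neither differentiated form of the factorization alone suffices: one is needed to eliminate $A_0$ acting from the left, the other to eliminate $A_0$ acting from the right. Only their simultaneous use causes all non-$W_0$ contributions to collapse, and this interplay is precisely the algebraic content of the real principal type hypothesis $BA_0 = A_0 B = aI$ with $a$ of simple characteristics.
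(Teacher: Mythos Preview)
Your proof is correct and follows essentially the same route as the paper's. The paper packages the two differentiation identities you use into the single formula
\[
2\{a,A_0\}=\{A_0B,A_0\}-\{A_0,BA_0\}=A_0\{B,A_0\}-\{B,A_0\}A_0,
\]
which, after multiplying \eqref{general-transport-equation} by $A_0$, immediately yields the closed linear ODE $\tfrac{d}{ds}(A_0U)+(\Div V/2)A_0U+\tfrac12\{B,A_0\}A_0U=0$; your two-step use of $A_0\,\partial B=(\partial a)I-(\partial A_0)B$ and $B\,\partial A_0=(\partial a)I-(\partial B)A_0$ is the same computation unpacked.
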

\begin{proof}
Multiply \eqref{general-transport-equation} by $A_0$ from the left to get
\[A_0\dot U+ (\Div V/2)A_0U +\frac 1 2A_0\{B,A_0\} U=0\]
along rays.
The derivative of $A_0(x,\xi)$ along bicharacteristics is given by
\[
2\frac{\intd}{\intd s} A_0\equiv 2\{aI,A_0\}
    =\{A_0B,A_0\}-\{A_0,BA_0\} = A_0\{B,A_0\}-\{B,A_0\}A_0.
\]
We obtain a homogeneous linear ordinary differential equation for $A_0U$,
\[\frac{\intd}{\intd s}(A_0U)+ (\Div V/2)A_0U +\frac 1 2\{B,A_0\}A_0 U=0.\]
The assertion of the lemma follows from the uniqueness of solutions to
initial value problems of this equation.
\end{proof}

Using the lemma, find a solution $U_0\neq 0$ of the transport equation
\begin{equation}
\label{transport-equation}
\dot U_0+ (\Div  V/2)U_0 +\frac{1}{2}\{B,A_0\}U_0 + iB A_{\sub}U_0 =0
\end{equation}
such that $A_0U_0=0$.
Thus $F_0=0$ and $BF_{-1}=0$.
Recall that the null space of $B$ equals the range of $A_0$.
Therefore, there exists $U_{-1}$ such that $F_{-1}=-A_0U_{-1}$.
Set $u= e^{\ii\frequency\theta}(U_0+ (\ii\frequency)^{-1} U_{-1})$
to obtain $F_0=0$ and $F_{-1}=0$.
To proceede by recursion, assume that
$u= e^{\ii\frequency\theta}\sum_{j<k} (\ii\frequency)^{-j} U_{-j}$
has been found such that $F_{-j}=0$ for $j<k$.
Then
\[
BF_{-k} = \dot U_{-k+1}+ (\Div  V/2)U_{-k+1} + i A_{\sub}'U_{-k+1} +W_{-k},
\]
where $W_{-k}$ is a sum of derivatives of $U_{-j}$, $j<k-1$.
We shall modify $U_{-k+1}$ and add $e^{\ii\frequency\theta} (\ii\frequency)^{-k} U_{-k}$
to $u$ such that $F_{-j}=0$ for $j\leq k$.
Using the lemma, solve
\[
\dot U+ (\Div  V/2)U +\frac{1}{2}\{B,A_0\}U + iB A_{\sub}U + BF_{-k}=0,
\quad A_0 U=0.
\]
Replace $U_{-k+1}$ by $U_{-k+1}+U$.
Then still $F_{-j}=0$ for $j<k$, and, in addition, we now have $BF_{-k}=0$.
Choose $U_{-k}$ as a solution of $F_{-k}+A_0U_{-k}=0$.
With this choice of $U_{-k}$ we arrive at $F_{-k}=0$,
which completes the recursive step of the construction of an asymptotic solution of $\hat A u=0$.

For later reference, we summarize the result just obtained.

\begin{proposition}
\label{prop-ray-theory}
Let $\hat A$ \pd\ system of real principal type with Hamilton function $a$,
and $B(x,\xi)A_0(x,\xi)=a(x,\xi)I$.
Let $\theta(x)$ with $\theta'(x)\neq 0$ be a solution of the eikonal equation $a(x,\theta'(x))=0$.
Denote by $V(x)\equiv \partial_\xi a(x,\theta'(x))$ the ray field,
and $\Div V\equiv \nabla_x\cdot V$ its divergence.
Let $U_0(x)$ be a solution of the transport equation \eqref{transport-equation}
which satisfies
\[A_0(x,\theta'(x))U_0(x)=0. \]
Then there is a solution of $\hat A u=0$ which is given by an
asymptotic sum \eqref{u-asympt-ansatz} with leading amplitude $U_0(x)$.
\end{proposition}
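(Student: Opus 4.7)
The plan is a recursive construction of the amplitudes $U_{-k}$, closely following the scheme sketched in the paragraphs immediately preceding the statement; the proposition essentially packages that scheme for later reference.

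I would begin with the base case. The hypothesis $A_0(x,\theta'(x))U_0(x)=0$ gives $F_0=0$ directly. For the subprincipal term, I would combine the general formula \eqref{F-minus-Eins} for $F_{-1}$ with the composition rule \eqref{psido-symbol-compo} applied to $\hat A'\equiv \hat B\hat A$, using that $A_0'=aI$ and hence $A_{\sub}'=BA_{\sub}+(2\ii)^{-1}\{B,A_0\}$ on $\ker A_0$. This yields $BF_{-1}=\dot U_0+(\Div V/2)U_0+\tfrac{1}{2}\{B,A_0\}U_0+\ii B A_{\sub}U_0$, which vanishes by the hypothesis that $U_0$ solves the transport equation \eqref{transport-equation}. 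The identity $\range(A_0)=\ker(B)$ from \eqref{range-eq-kernel} then provides $U_{-1}$ with $A_0U_{-1}=-F_{-1}$, and absorbing $(\ii\frequency)^{-1}U_{-1}$ into the ansatz makes $F_{-1}=0$ while preserving $F_0=0$.

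The inductive step proceeds in the same spirit: assuming $F_{-j}=0$ for $j<k$, the same symbol-calculus bookkeeping expresses $BF_{-k}$ as the transport operator in \eqref{general-transport-equation} applied to $U_{-k+1}$, plus a remainder $W_{-k}$ built from derivatives of previously constructed amplitudes $U_{-j}$ with $j<k-1$. I would invoke Lemma~\ref{lemma-soln-transport-eqn} to solve the associated inhomogeneous transport equation for a correction $U$, choosing initial data on a transversal to the ray field so that $A_0U=0$ there; the lemma propagates this to the whole ray. Replacing $U_{-k+1}$ by $U_{-k+1}+U$ makes $BF_{-k}=0$ without disturbing $A_0U_{-k+1}=-F_{-k+1}^{\text{old}}$ or any previously achieved vanishing. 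A second use of \eqref{range-eq-kernel} produces $U_{-k}$ with $A_0U_{-k}=-F_{-k}$, yielding $F_{-k}=0$. With all coefficients in hand, a standard Borel resummation assembles the formal series into a smooth $u(x;\frequency)$ with the asserted asymptotic expansion, so that $\hat Au$ is negligible in a conic neighborhood of $\{(x,\frequency\theta'(x))\}$.

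The step demanding the most care is verifying at each stage that $BF_{-k}$ actually presents itself in the canonical transport form of \eqref{general-transport-equation}, so that Lemma~\ref{lemma-soln-transport-eqn} applies. This rests on the identity $A_{\sub}'=BA_{\sub}+(2\ii)^{-1}\{B,A_0\}$ on $\ker A_0$ established at the base step, on the observation that non-transport contributions at order $-k$ depend only on amplitudes already fixed, and on the real principal type structure, which via \eqref{range-eq-kernel} guarantees that the range/kernel matching needed to solve $A_0U_{-k}=-F_{-k}$ is always available. No further obstacle appears: the construction is purely algebraic plus ODE solvability along rays.
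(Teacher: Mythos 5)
Your proposal reproduces the paper's own construction essentially line for line: it establishes $F_0=0$, derives $BF_{-1}=\dot U_0+(\Div V/2)U_0+\tfrac12\{B,A_0\}U_0+\ii BA_{\sub}U_0$ via the composition identity for $A_{\sub}'=(\hat B\hat A)_{\sub}$ on $\ker A_0$, uses \eqref{range-eq-kernel} to choose $U_{-1}$, and then iterates the same two-step "solve the inhomogeneous transport equation using Lemma~\ref{lemma-soln-transport-eqn}, then kill the remainder with $\range A_0=\ker B$" pattern, finishing with an asymptotic (Borel) summation. This is the same route the paper takes, and your added remark that the correction $U$ satisfies $A_0U=0$ and therefore does not disturb the already-achieved $F_{-j}=0$ for $j<k$ is exactly the point the paper relies on implicitly.
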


\section{Subsonic displacement boundary problems}
\label{sect-displacement-bp}

We return to elastodynamics.
Given a subsonic phase $\theta$ and a vector-valued amplitude $W$,
each defined and smooth on the space-time boundary of the elastic body,
we solve, asymptotically as $\frequency\to\infty$, the displacement boundary problem
\[\hat Lu=0\quad\text{in $B$,} \qquad u =e^{\ii\frequency\theta}W\quad\text{at $S$}.\]
To construct $u$ we use coordinates adapted to $S$
as introduced in Section~\ref{section-geometrical-language}.
Thus the boundary condition reads
\begin{equation}
\label{displacementBdryCond}
u|_{x^3=0} = e^{\ii\frequency\theta(x^0,x^1,x^2)}W(x^0,x^1,x^2).
\end{equation}
The time variable is denoted $t$ or $x^0$, the dual variable is $\xi_0$.

First, we define the subsonic region.
Recall from \eqref{defAj} the matrices $A_j$.
Since the material parameters are allowed to vary smoothly with spatial position $x$,
we have $A_j=A_j(x,\eta)$, and
the acoustic tensor \eqref{princsymbElastodynOp} is given as follows:
\[
L_0(x,\xi)= \xi_3^2 A_0(x) +\xi_3(A_1(x,\eta)+A_1(x,\eta)^T)
   + A_2(x,\eta) -\xi_0^2\rho(x) I,
\]
where $\eta=(\xi_1,\xi_2,0)$.
We say that, at the surface point $y=(x^1,x^2)$, the covector
$\zeta=(\xi_0,\xi_1,\xi_2)$ is subsonic if
$L_0(x,\xi)|_{x^3=0}$ is positive definite for real $\xi_3$.
If $\zeta$ is subsonic, then so are $-\zeta$ and $s\zeta$ for $s>0$.
As in Section~\ref{section-homog-halfspace},
we appeal to Proposition~\ref{prop-factor-quad-pol} of the appendix,
and we get, if  $\zeta$ is subsonic at $y$, a factorization
\begin{equation}
\label{factor-of-acoustic-tensor-x-depend}
L_0(x,\xi)=(\xi_3I-Q_0^*(x,\zeta))A_0(x)(\xi_3I-Q_0(x,\zeta)),
\end{equation}
where $Q_0(x,\zeta)$ depends smoothly on $x$ and $\zeta$,
and the spectrum of $Q_0$ is contained in the upper complex half-plane.
The factorization \eqref{factor-of-acoustic-tensor-x-depend} 
holds in a boundary layer $0\leq x^3<\delta$.
The matrices $Q_0(x,\zeta)$ are uniquely determined.
Using the uniqueness, it is seen that $Q_0$ satisfies $Q_0(x,s\zeta)=sQ_0(x,\zeta)$ if $s>0$.
However, note that $Q_0(x,-\zeta)\neq -Q_0(x,\zeta)$ because of the spectral condition.
It follows that $Q_0$ cannot be a polynomial in $\zeta$.

The phase function $\theta$ is assumed to be subsonic.
By definition, this means that its derivatives $\zeta=\theta'$ are non-zero and subsonic.
Moreover, we assume that we have a zeroth order scalar symbol $\chi(x,\xi)$,
supported in the subsonic region, such that, for $|\xi|>1$,
$\chi(x,\xi)$ equals unity in an open conic neighbourhood $\Gamma_1$
of the set $\Gamma$ of $(x,\xi)$, $\xi=s\theta'(x)$, $s>0$.

The \pd\ operator $(\chi Q_0)(x,D')$, $D'\equiv (D_0,D_1,D_2)$,
is defined, has the order one, and is tangential.
A \pd\ operator is said to be tangential if it commutes with multiplication by $x^3$,
or, equivalently, the operator pseudo-differentiates only with respect to the coordinates
$(x^0,x^1,x^2)$, and depends smoothly on $x^3$ as a parameter.

The elastodynamic operator $\hat L$ is a second degree polynomial in $D_3=-\ii\partial_3 I$
with tangential differential operators as coefficients.
Let $\hat Q$ be a tangential first order \pd\ operator,
say $\hat Q=(\chi Q_0)(x,D')$, whose principal symbol is, on $\Gamma_1$, equal to $Q_0$.
By \eqref{factor-of-acoustic-tensor-x-depend} and the symbol calculus we have
\begin{equation}
\label{operatorFactorization}
\hat L= (D_3-\hat P)A_0(D_3-\hat Q) + \hat R^0 D_3 +\hat R^1,
\end{equation}
where $\hat P$ is the adjoint of $\hat Q$, $P_0=Q_0^*$, and $\hat R^j$ is a tangential
\pd\ operator of order at most $j$.
Moreover, on $\Gamma_1$, the principal symbols $R^j_0$ of $\hat R^j$ satisfy
\begin{align*}
R^0_0 &= B_0 +\ii\partial_3 A_0 +A_0Q_{-1} +P_{-1}A_0
      -\ii\sum\nolimits _j (\partial_{\xi_j}P_0)(\partial_{x^j} A_0), \\
R^1_0 &= B_1 -\ii\partial_3 (A_0Q_0)  -P_0A_0Q_{-1} -P_{-1}A_0Q_0
      +\ii\sum\nolimits _j (\partial_{\xi_j}P_0)(\partial_{x^j} A_0Q_0).
\end{align*}
Here we used \eqref{eq-psido-compos-sub} and \eqref{eqDivStressInNormalCoord},
and we have set
\begin{equation}
\label{def-B0-B1}
B_0(x) =[B^{ik3}],\quad B_1(x,\eta)=[B^{ik\lambda}\eta_\lambda]. 
\end{equation}

Our next goal is to improve, by modifying $\hat Q$ and $\hat P$,
the factorization \eqref{operatorFactorization}
so that the remainders $\hat R_j$ are negligible in a conic neighbourhood of $\Gamma$.
To do so, we first eliminate $P_{-1}$ from the above equations for $R^j_0$, and we set $R^j_0=0$.
We derive the following equation for $Q_{-1}$:
\begin{equation}
\label{Lyapunov-AQminus}
\begin{aligned}
Q_0^* & A_0Q_{-1} -A_0Q_{-1}Q_0 \\
 &= B_0Q_0+B_1 - \ii A_0\partial_3 Q_0
   + \ii\sum\nolimits _j (\partial_{\xi_j}Q_0^*)A_0(\partial_{x^j} Q_0).
\end{aligned}
\end{equation}
This is a uniquely solvable Sylvester equation for $A_0Q_{-1}$
because $Q_0$ and its Hermitian adjoint have disjoint spectra.
Having found $Q_{-1}$, we solve for $P_{-1}$.
The symbols $Q_{-1}$ and $P_{-1}$ are only defined in $\Gamma_1$.
Fix a new cutoff symbol $\chi_1$ which equals unity in a conic neighbourhood
of $\Gamma$ and is supported in the set where $\chi=1$.
Replace $\hat Q$ by $\hat Q+(\chi_1 Q_{-1})(x,D')$, similarly for $\hat P$.
Then \eqref{operatorFactorization} holds with remainders $\hat R^j$
which have, where $\chi_1=1$, order at most $j-1$.
This construction of reducing the order of the remainders can be iterated.
In fact, one recursively solves equations $Q_0^*X-XQ_0=Y$ for $X=A_0Q_{-k}$.
Using asymptotic summation, a factorization \eqref{operatorFactorization}
is obtained such that the remainders are negligible in a neighbourhood of $\Gamma$.

\begin{remark}
The result \eqref{operatorFactorization} is a factorization of the elastodynamic operator
into a product of \pd\ operators.
The factorization does not hold everywhere, but only microlocally in
(some chosen subregion of) the subsonic region.
Such a factorization is not possible using only  differential operators,
\end{remark}

Our main application of the factorization \eqref{operatorFactorization}
is the construction of solutions to displacement boundary problems.

\begin{lemma}
\label{lemma-u-expdecaying}
Let $\theta(x^0,x^1,x^2)$ be a subsonic phase function, $W(x^0,x^1,x^2)$ a smooth amplitude.
There is a solution $u$,
\begin{equation}
\label{ansatz-u-expdecaying}
u\asymptotic e^{\ii\frequency\theta(x^0,x^1,x^2)}\sum\nolimits _{j=0}^\infty
      (\ii\frequency)^{-j} U_{-j}(x^0,x^1,x^2,\frequency x^3),
\end{equation}
which solves $\hat Lu=0$ asymptotically as $\frequency\to \infty$ in $x^3\geq 0$,
and satisfies the displacement boundary condition \eqref{displacementBdryCond}.
Moreover, $D_3u =\hat Q u$, and $u$ decays exponentially in $\frequency x^3$.
\end{lemma}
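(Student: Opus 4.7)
The strategy is to exploit the microlocal factorization \eqref{operatorFactorization}. Since the remainder $\hat R^0 D_3 + \hat R^1$ is negligible on a conic neighbourhood of $\Gamma$, I would reduce the displacement boundary problem to the first-order evolution equation
\[
(D_3 - \hat Q) u = 0, \qquad u|_{x^3 = 0} = e^{\ii \frequency \theta} W.
\]
Any asymptotic solution of this first-order problem whose frequency content is concentrated on $\Gamma$ will automatically satisfy $\hat L u \asymptotic 0$: the first factor $(D_3 - \hat P)A_0(D_3 - \hat Q)$ annihilates $u$ by construction, and each term $\hat R^j u$ is asymptotically zero because every coefficient in its stationary-phase expansion involves the symbols $R^j_{-k}$ evaluated on $\Gamma$, where they vanish identically.

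To construct $u$ I would substitute the ansatz \eqref{ansatz-u-expdecaying} into $D_3 u = \hat Q u$, with $z \equiv \frequency x^3$ treated as a parameter. Since $\hat Q$ is tangential, the stationary-phase expansion of $\hat Q (e^{\ii\frequency\theta} V)$ in the variable $y = (x^0, x^1, x^2)$ produces an asymptotic series $\frequency e^{\ii\frequency\theta} \sum_n (\ii\frequency)^{-n} H_{-n}(y,z)$ whose $n$-th coefficient depends on $U_0, \dots, U_{-n}$ and finitely many of their $y$-derivatives, all evaluated at $\xi = \theta'(y)$. Using $D_3 u = \frequency e^{\ii\frequency\theta} D_z V$ and matching powers of $(\ii\frequency)^{-1}$ yields a recursive sequence of first-order ordinary differential equations in $z$,
\[
D_z U_{-n} - Q_0(y, \theta'(y)) \, U_{-n} = F_{-n}(y, z),
\]
where $F_0 = 0$ and, for $n \geq 1$, $F_{-n}$ is a known expression in $U_0, \dots, U_{-(n-1)}$ and their $y$-derivatives.

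I would impose $U_0(y, 0) = W(y)$ and $U_{-n}(y, 0) = 0$ for $n \geq 1$. The leading amplitude is then $U_0(y, z) = \exp\!\bigl(\ii z Q_0(y, \theta'(y))\bigr) W(y)$; since the spectrum of $Q_0$ lies in the open upper half-plane, uniformly for $y$ in compact sets, $U_0$ decays exponentially in $z$. The higher amplitudes are given by Duhamel's formula,
\[
U_{-n}(y, z) = \ii \int_0^z \exp\!\bigl(\ii (z - s) Q_0(y, \theta'(y))\bigr) F_{-n}(y, s) \intd s,
\]
and a convolution estimate combining the uniform spectral gap of $Q_0$ with the inductive decay of $F_{-n}$ propagates exponential decay through all orders. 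An asymptotic sum via a Borel-type construction with cutoffs that respect the decay then produces the desired $u$, and the identity $D_3 u = \hat Q u$ holds by construction.

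The main technical obstacle is propagating exponential decay of the amplitudes, \emph{uniformly in $y$}, through the Duhamel recursion while retaining smooth dependence on $y$; this requires a uniform spectral gap for $Q_0(y, \theta'(y))$ on compact sets of $y$, which is guaranteed by the subsonicity of $\theta$ and the smooth dependence of the factorization \eqref{factor-of-acoustic-tensor-x-depend} on its parameters. Once the amplitudes are in hand, the asymptotic vanishing of $\hat L u$ follows from the factorization as already indicated.
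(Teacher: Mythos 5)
Your proposal follows essentially the same route as the paper: reduce to the first-order evolution $(D_3-\hat Q)u=0$ with $U_0|_{x^3=0}=W$, $U_{-j}|_{x^3=0}=0$ for $j>0$, recursively solve ODEs in $z=\frequency x^3$, propagate exponential decay through the Duhamel recursion via the spectral gap of $Q_0$, and invoke negligibility of the factorization remainder near $\Gamma$ to conclude $\hat Lu\asymptotic 0$. One technical point the paper spells out and your sketch glosses over: because $\hat Q$ is tangential, its symbols $Q_{-j}(x,\zeta)$ depend smoothly on the parameter $x^3$, which in your ansatz equals $z/\frequency$; to convert $\hat Q\bigl(e^{\ii\frequency\theta}\sum(\ii\frequency)^{-j}U_{-j}(y,\frequency x^3)\bigr)$ into a genuine asymptotic series in $\frequency$ with $z$-dependent coefficients, you must Taylor-expand each $Q_{-j}$ in $x^3$ about $x^3=0$, so that every power of $x^3$ contributes a matching power of $\frequency^{-1}$. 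That is why your transport ODEs correctly feature $Q_0(y,\theta'(y))$ evaluated at $y=(x^1,x^2,0)$, but the forcings $F_{-n}$ pick up additional contributions from this re-expansion beyond the $y$-derivative terms you list. With that repair the recursion, the Duhamel estimate, and the Borel summation go through exactly as you describe.
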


The proof consists of finding an asymptotic solution \eqref{ansatz-u-expdecaying}
of $(D_3-\hat Q)u=0$ in $x^3\geq 0$.
To satisfy the boundary condition \eqref{displacementBdryCond} 
we require $U_0|_{x^3=0}=W$ and $U_{-j}|_{x^3=0}=0$ if $j>0$.
By the asymptotic expansion lemma for \pd\ operators,
\[
\ii(D_3-\hat Q)u\asymptotic  \frequency e^{\ii\frequency\theta}\sum\nolimits _{j=0}^\infty
    (\ii\frequency)^{-j} F_{-j}(x^1,x^2,\frequency x^3),
\]
where, if we set $\zeta=\theta'(x^0,x^1,x^2)$, $y=(x^1,x^2,0)$,
\[
F_{-j}(x) = \partial_3 U_{-j}(x) -\ii Q_0\big(y,\zeta\big) U_{-j}(x) + G_{-j}(x).
\]
When deriving these formulas, the symbols $Q_{-j}$ of $\hat Q$ are replaced
by their Taylor series in the $x^3$ variable, and it is used that a power
of $x^3$ give rise to a power of $1/\frequency$ with the same exponent.
The remainder term $G_{-j}$ is an expression involving $U_{-\ell}$ for $\ell<j$, and $G_0=0$.
We recursively solve the equations $F_{-j}=0$ for $U_{-j}$ imposing,
in addition, the above-mentioned boundary conditions at the surface.
Since $\ii Q_0$ has its spectrum in the left half-plane,
the solutions $U_{-j}(x)$ decay exponentially as $x^3\to\infty$.
The factorization remainder $\hat R^0 D_3+\hat R^1$ is negligible in a
neighbourhood of $\Gamma$.
Therefore, $(D_3-\hat Q)u=0$ implies $\hat L u=0$.

The assertion of the lemma still holds if \eqref{displacementBdryCond} is
replaced by the more general displacement boundary condition $u|_{x^3=0}= w$,
\[
w(x^0,x^1,x^2)\asymptotic 
   e^{\ii\frequency\theta(x^0,x^1,x^2)}\sum\nolimits _{j=0}^\infty
         (\ii\frequency)^{-j}W_{-j}(x^0,x^1,x^2).
\]
In fact, the proof of the lemma is readily adapted to cover this situation too.
\begin{remark}
The methods of this section originate from the theory of elliptic boundary problems.
\zitatau{WlokaRowleyLawruk95}{Wloka et al.} systematically use spectral factorizations
of matrix polynomials in their treatment of such problems.
\end{remark}

\section{Existence of subsonic free surface waves}
\label{sect-exist-freesurfwave}

As in \eqref{def-Z} we define, in the subsonic region,
the surface impedance tensor $Z_0$:
\begin{equation}
\label{def-impedance-tensor}
\ii Z_0(y,\zeta)\equiv A_0(y)Q_0(y,\zeta)+A_1(y,\eta).
\end{equation}
Here $y=(x^1,x^2)=(x^1,x^2,0)$, $\eta=(\xi_1,\xi_2)=(\xi_1,\xi_2,0)$, and $\zeta=(\xi_0,\eta)$.
Since $Q_0$ and $A_1$ are homogeneous of degree $1$ in the variables $\xi_j$, so is $Z_0$.
The subsonic region is conic.
Therefore, there is a function $c_\infty(y,\eta)>0$,
homogeneous of degree one in $\eta$, such that $\zeta$ is subsonic if and only if
$c_\infty(y,\eta) > |\xi_0|$ holds.
If we freeze a point $y$ on the surface and
a unit horizontal propagation direction $\eta$, $|\eta|=1$,
then $Z_0$ is the surface impedance tensor of a homogeneous half-space:
$Z_0(y,\zeta)=Z(c)$, where $c=-\xi_0$,
and $c_\infty(y,\eta)$ is the limiting velocity.
This shows that Proposition~\ref{Z-prop} and the proof of the Barnett--Lothe Theorem
apply mutatis mutandis to $Z_0(y,\zeta)$.

Subsonic free surface waves correspond to solutions of $\det Z_0(y,\zeta)=0$,
which is known as the secular equation.
We assume from now on that the secular equation has a zero $\xi_0<0$,
necessarily unique and simple, for every surface point $y$ and every direction $\eta\neq 0$.
By the implicit function theorem,
there is a smooth function $c_R(y,\eta)$, homogeneous of degree one in $\eta$,
such that $0<c_R(y,\eta)<c_\infty(y,\eta)$, and
\begin{equation}
\label{det-Z0-simple}
\det Z_0=bh,\quad h(y,\zeta)\equiv \xi_0+c_R(y,\eta), \quad b(y,\zeta) > 0.
\end{equation}
We call $h$ the Hamilton function of the free surface wave problem.
The phase function $\theta\equiv\varphi(y)-t$ is subsonic and
solves $h(y,\theta')=0$ if and only if the eikonal
equation $c_R(y,\varphi'(y))=1$ is satisfied.
The ray field is $V=(1,(\partial c_R/\partial\eta)(y,\varphi'(y)))$.
So we take time $t$ as the parameter along rays, and we regard rays
as the spatial curves $y(t)$ which satisfy
\[\frac{\intd}{\intd t}y(t)=\frac{\partial c_R}{\partial\eta}\big(y(t),\varphi'(y(t))\big).\]
Note that the null-space $N(y)$ of $Z_0(y,-c_R(y,\eta),\eta)$, $\eta=\varphi'(y)$,
is one-dimensional.
Denote by $r\geq 0$ the distance to the boundary surface $S$.

\begin{theorem}[Existence of subsonic free surface waves]
\label{main-thm-exist-ssw}
Let $\varphi$ be a solution of the eikonal equation
\begin{equation}
\label{thm-eikonal}
c_R(y,\varphi'(y))=1.
\end{equation}
There is a matrix-valued function $H(y)$, which can be calculated algebraically
from the derivatives up to second order of
the elasticities, the material density, and the surface $S$,
such that the following holds.
Let $W_0(y)$ satisfy, for every ray $y(t)$, the transport equation
\begin{equation}
\label{thm-transport-eqn-Wzero}
\frac{\intd}{\intd t}W_0(y(t))+\frac 1 2\Div V(y(t))W_0(y(t))+ H(y(t))W_0(y(t))=0
\end{equation}
and $W_0(y(t))\in N(y(t))$ for some $t$.
Then $W_0(y)\in N(y)$ holds for all $y$.
The free surface boundary problem \eqref{eq-elastodyn-freebdry} has a non-zero solution
\begin{equation}
\label{thm-u}
u(t,y,r;\omega)\asymptotic e^{\ii\frequency(\varphi(y)-t)}\sum\nolimits _{j=0}^\infty
      (\ii\frequency)^{-j} U_{-j}(y,r\omega).
\end{equation}
The amplitudes $U_{-j}(y,r\omega)$ decay exponentially as $r\omega\to\infty$,
and the leading surface amplitude equals $U_0(y,0)=W_0(y)$.
\end{theorem}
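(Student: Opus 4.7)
The plan is to reduce the traction-free boundary problem to a single tangential pseudo-differential equation $\hat Z w = 0$ on the space-time boundary, where $\hat Z$ is the surface impedance operator whose principal symbol is $Z_0$ from \eqref{def-impedance-tensor}, and then to apply the real-principal-type ray theory developed in Section~\ref{section-ray-theory} to this boundary equation.

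First, for any smooth surface amplitude $W(y)$ supported in the subsonic region I would use Lemma~\ref{lemma-u-expdecaying} to produce an exponentially decaying displacement $u$ whose boundary value is $e^{\ii\frequency\theta}W$ and which satisfies $\hat L u = 0$ modulo negligible errors together with $D_3 u = \hat Q u$. Plugging $D_3 u|_S = \hat Q u|_S$ into the traction formula \eqref{eqTractionInNormalCoord} and mimicking the computation that produced \eqref{def-Z}, the surface traction equals $-\frequency\,\hat Z w$ modulo lower-order terms, where $\hat Z$ is a tangential pseudo-differential operator of order one whose principal symbol is $Z_0$. Its subprincipal symbol $Z_{\sub}$ I would extract by tracking the next term in the composition \eqref{operatorFactorization}, using the subleading symbol $Q_{-1}$ determined by the Sylvester equation \eqref{Lyapunov-AQminus} together with the Christoffel contribution in \eqref{eqTractionInNormalCoord}; the resulting $Z_{\sub}$ is algebraic in the first two derivatives of the elasticities, the density, and the surface. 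With $\hat Z$ in hand, the free-surface condition $\stress^{ij}\nu_j|_S = 0$ becomes, microlocally in the subsonic region, the reduced wave equation $\hat Z w = 0$.

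Next I would observe that $\hat Z$ is of real principal type and apply Proposition~\ref{prop-ray-theory}. By the standing assumption \eqref{det-Z0-simple}, $\det Z_0 = b h$ with $b > 0$ and $h$ real-valued and simply vanishing, so setting $B_0 := b^{-1}\operatorname{cof}(Z_0)$ gives $B_0 Z_0 = h I$; by \eqref{thm-eikonal} the phase $\theta = \varphi(y) - t$ solves the eikonal equation $h(y,\theta'(y)) = 0$. Proposition~\ref{prop-ray-theory} then yields an asymptotic solution $w \asymptotic e^{\ii\frequency\theta}\sum(\ii\frequency)^{-j}W_{-j}(y)$ of $\hat Z w = 0$ whose leading amplitude satisfies $Z_0(y,\theta'(y))W_0(y) = 0$, i.e.\ $W_0(y) \in N(y)$, and the transport equation \eqref{transport-equation} with $A_0 = Z_0$, $B = B_0$, $A_{\sub} = Z_{\sub}$. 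Setting
\[
H(y) := \tfrac{1}{2}\{B_0, Z_0\}(y,\theta'(y)) + \ii B_0(y,\theta'(y))\,Z_{\sub}(y,\theta'(y)),
\]
this is exactly \eqref{thm-transport-eqn-Wzero}; preservation of $W_0(y) \in N(y)$ along rays is Lemma~\ref{lemma-soln-transport-eqn}, and $H$ inherits the algebraic character of $Z_0$ and $Z_{\sub}$.

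Finally, feeding this boundary datum back into Lemma~\ref{lemma-u-expdecaying} produces the displacement $u$ with the expansion \eqref{thm-u}; the exponential decay of each $U_{-j}(y,r\omega)$ as $r\omega\to\infty$ is guaranteed by the spectral condition on $Q_0$. By construction $\hat L u = 0$ and the traction vanishes modulo negligible errors, and $U_0(y,0) = W_0(y) \neq 0$ whenever the transport equation is solved with non-zero initial datum. The main obstacle I anticipate is the bookkeeping for $Z_{\sub}$: one must verify that the subprincipal contributions coming from \eqref{operatorFactorization}, \eqref{Lyapunov-AQminus}, and \eqref{eqTractionInNormalCoord} assemble into an invariant object on the boundary that is independent of the auxiliary cutoffs used in the pseudo-differential construction. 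Once this is done, the remaining steps reduce to direct applications of Proposition~\ref{prop-ray-theory} and Lemma~\ref{lemma-soln-transport-eqn}.
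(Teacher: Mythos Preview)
Your proposal is correct and mirrors the paper's argument: define the surface impedance operator $\hat Z$ from the traction formula \eqref{eqTractionInNormalCoord} together with $D_3 u=\hat Q u$, observe via \eqref{det-Z0-simple} and the cofactor construction that $BZ_0=hI$ so that $\hat Z$ is of real principal type, apply Proposition~\ref{prop-ray-theory} and Lemma~\ref{lemma-soln-transport-eqn} to solve $\hat Z w=0$ with transport coefficient $H=\tfrac{1}{2}\{B,Z_0\}+iBZ_{\sub}$, and then lift $w$ into the interior by (the extended form of) Lemma~\ref{lemma-u-expdecaying}. Two minor corrections: the traction is $-\hat Z w$, not $-\omega\hat Z w$ (the first-order operator $\hat Z$ already accounts for the factor of $\omega$), and your symbol $B_0$ clashes with the paper's $B_0$ in \eqref{def-B0-B1}; apart from these, your outline and the paper's proof coincide, including the explicit formula \eqref{coeff-H-transport-eqn} for $H$ and the deferral of its algebraic evaluation to Section~\ref{sect-transport-eqn}.
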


Let $\hat Q$ be the \pd\ operator of Lemma~\ref{lemma-u-expdecaying}
with the defining cutoff symbol $\chi$ chosen equal to unity in a conic
neighbourhood of the set $\Xi$ which consist of $(y,s\zeta)$,
where $\zeta=(-1,\varphi'(y))$, $s>0$. 
Consider
\begin{equation}
\label{def-T}
T=[T_i^k], \quad T_i^k\equiv C_i^{\phantom{i}3j\ell}\Gamma^k_{j\ell},
\end{equation}
and regard $T$ as a multiplication operator.
We define the surface impedance operator $\hat Z$ by
\begin{equation}
\label{defImpedanceOperator}
\hat Z\equiv -\ii(A_0 \hat Q+\hat A_1 +\ii T).
\end{equation}
Notice that $\hat Z$ is a first order \pd\ operator on the space-time boundary surface.
The principal symbol of $\hat Z$ is $-\ii(A_0\chi Q_0+A_1)$
which equals the surface impedance tensor $Z_0$ in a conic neighbourhood of $\Xi$.
Let $B(y,\zeta)$ be the cofactor matrix of $Z_0(y,\zeta)$ divided by the
factor $b(y,\zeta)$ of \eqref{det-Z0-simple}.
By Cramer's rule,
\begin{equation}
\label{rpt-factor-B-of-Z}
B(y,\zeta)Z_0(y,\zeta)=h(y,\zeta)I.
\end{equation}
Hence $\hat Z$ is (microlocally near $\Xi$)
a real principal type operator with Hamilton function $h$.

We can now apply Proposition~\ref{prop-ray-theory} to obtain a solution
\[
w\asymptotic e^{\ii\frequency(\varphi(y)-t)}\sum\nolimits _{j=0}^\infty
     (\ii\frequency)^{-j} W_{-j}(y)
\]
of $\hat Z w=0$.
(Notice that the transport equations can be solved with amplitudes
which do not depend explicitly on time $t$.)
From \eqref{transport-equation} we obtain the transport equation
\eqref{thm-transport-eqn-Wzero}, where the coefficient matrix is given by
\begin{equation}
\label{coeff-H-transport-eqn}
H= \frac{1}{2}\{B,Z_0\} + iB Z_{\sub}
\end{equation}
after evaluating at $\eta=\varphi'(y)$ and $\xi_0=-c_R(y,\eta)$.
Here $B$ is defined in \eqref{rpt-factor-B-of-Z}, and $Z_{\sub}$
is the subprincipal symbol of the surface impedance operator.
Using Lemma~\ref{lemma-soln-transport-eqn} it follows that $W_0(y)\in N(y)$.
Using Lemma~\ref{lemma-u-expdecaying} and the remark at the end of its proof,
we obtain a solution \eqref{thm-u} of the following displacement boundary problem:
$\hat L u=0$ in $r=x^3\geq 0$, and $u=w$ at $x^3=0$.
Moreover, we get that the $U_{-j}$ decay exponentially as $\frequency r\to\infty$.

Recall the formula \eqref{eqTractionInNormalCoord} for the surface
traction $\traction= [\stress_i^{\phantom{i}3}]$ of the field $u$.
Since $D_3 u=\hat Q u$ holds, we find that, in view of the definition 
of the surface impedance operator, the surface traction vanishes:
\[-\ii \traction = A_0D_3 u +\hat A_1 u +\ii Tu =\ii \hat Z w=0\quad\text{at $x^3=0$.}\]
This proves the existence of a free subsonic surface wave.
In the next section we explain how $H(y)$ is evaluated.

\begin{remark}
The theorem says, in particular, that
$c_R(y,\eta)/|\eta|$ is the wave speed of a subsonic free surface wave
at the surface point $y$ in the horizontal direction $\eta=\varphi'(y)$.
Notice that an ansatz \eqref{thm-u} will solve \eqref{eq-elastodyn-freebdry} only if
the phase function $\varphi$ satisfies the eikonal equation \eqref{thm-eikonal}.
\end{remark}

\section{Evaluation of the transport equation}
\label{sect-transport-eqn}

Before we discuss the evaluation of the coefficient matrix $H(y)$,
let us remark how the transport equation \eqref{thm-transport-eqn-Wzero}
can be reduced to a scalar equation.
Assume given a smooth unit field $W(y)$ which spans the one-dimensional null-space $N(y)$.
Then the leading amplitude equals $W_0=\psi W$ with a complex-valued function $\psi(y)$
which satisfies, along rays $y(t)$, the scalar transport equation
\begin{equation}
\label{scalar-transport-eqn}
\frac{\intd}{\intd t} \psi +\frac 1 2 (\Div V)\psi
 +\big\langle W|\frac{\intd}{\intd t}W+ HW\big\rangle\psi=0.
\end{equation}
The angular brackets denote the inner product defined by the metric tensor,
conjugate linear in the first slot.
If the surface $S$ is not plane, then the inner product will
not be constant along rays, in general.

We assume that the elasticity tensor, the material density, and the boundary surface
are known, including first and second order derivatives.
These data determine the transport equation.
We make no attempt to derive analytical solution formulas.
Rather we explain how the transport equation can be set up
ready for numerical computation.
To do actual computations, one chooses coordinates $y=(x^1,x^2)$ on $S$.
The well-known ray coordinates are one convenient choice.
For computations at interior points, the unique extensions
as adapted coordinates, $x=(x^1,x^2,x^3)$, are used.
The data determine the acoustic tensor or principal symbol
of the elastodynamic operator, $L_0$.
At every point of $S$, the acoustic tensor is viewed as a matrix polynomial
in the variable conormal to $S$ at $y$.
A factorization algorithm for matrix polynomials has to produce the
factorizations \eqref{factor-of-acoustic-tensor-x-depend}.
Because of homogeneity it suffices to evaluate $Q_0(y,\xi_0,\eta)$ only at unit vectors $\eta$.
To actually compute $Q_0$ one may have to determine eigenvalues, eigenvectors,
and in general also Jordan-Keldysh chains, of the matrix polynomial.
The result of the computation also gives,
by \eqref{def-impedance-tensor}, the impedance tensor $Z_0$.

First and second order derivatives of $Z_0$ are also needed.
To calculate these,
we procede as in the proof of Proposition~\ref{Z-prop}\eqref{Z-prop-deriv-posdef}.
The surface impedance tensor satisfies the Ricatti equation
\[
\big(Z_0(\zeta)-\ii A_1^T(\eta)\big)A_0^{-1}\big(Z_0(\zeta)+\ii A_1(\eta)\big)
   = A_2(\eta)-\xi_0^2\rho I,
\]
where we suppressed dependency on $x$ from the notation.
Denote differentiation with respect to some chosen coordinate or parameter by a prime.
Differentiate the Ricatti equation, and obtain a linear equation
\begin{equation}
\label{QstarX-XQ-eq-Y}
Q_0^* X-X Q_0=Y
\end{equation}
for the complex $3\times 3$ matrix $X\equiv Z_0'$.
The right-hand side $Y$ is evaluated as an algebraic expression
in $Z_0$, $A_0$, $A_1$, $\rho$, and in the derivatives $A_0'$, $A_1'$, and $\rho'$.
Since $Q_0$ and $Q_0^*$ have disjoint spectra,
the Sylvester equation \eqref{QstarX-XQ-eq-Y} is uniquely solvable.
Knowing $Z_0'$ we also know the derivative $Q_0'$ of $Q_0=A_0^{-1}(\ii Z_0-A_1)$.
Differentiating \eqref{QstarX-XQ-eq-Y}, we obtain an equation for the derivative $X'$:
\[Q_0^* X'-X' Q_0=Y'-(Q_0')^* X- X Q_0'.\]
We conclude that first and second order derivatives of $Z_0$ can be computed 
from those of the elasticities and the material density by purely algebraic computations.

Next we indicate how to evaluate the Poisson bracket term \eqref{coeff-H-transport-eqn}.
Recall that $bB$ is the cofactor matrix of $Z_0$.
Therefore, a first order derivative $(bB)'$ is an algebraic expression
in the elements of $Z_0$ and $Z_0'$.
So we can evaluate $\{bB,Z_0\}$.
Observe from \eqref{det-Z0-simple} that
\begin{equation}
\label{b-eq-detprime}
b(y,\eta)=\partial_{\xi_0}\det Z_0(y,\xi_0,\eta)\quad\text{at $h(y,\xi_0,\eta)=0$.}
\end{equation}
We calculate
\begin{align*}
\{B,Z_0\} &=b^{-1}\{bB,Z_0\}+B\{b^{-1}I,Z_0\}
           =b^{-1}\{bB,Z_0\}-b^{-2}B\{bI,Z_0\} \\
          &=b^{-1}\{bB,Z_0\}+b^{-2}\{bI,B\}Z_0+b^{-2}\{h,b\}I.
\end{align*}
Here we have taken the Poisson bracket of \eqref{rpt-factor-B-of-Z} with $bI$ 
to obtain  the last equality.
Hence
\[\{B,Z_0\}W=b^{-1}\{bB,Z_0\}W+b^{-2}\{h,b\}W\quad\text{if $W(y)\in N(y)$.}\]
It is well-known that $\{h,f\}$ vanishes on the set of zeros of $h$, if $f$ does.
Therefore, in view of \eqref{b-eq-detprime},
the scalar Poisson bracket $\{h,b\}$ remains unchanged at the zero set of $h$,
if we replace $b$ by $\partial_{\xi_0}\det Z_0$.

By \eqref{def-Asub} and \eqref{eq-psido-compos-sub},
the subprincipal symbol of $\hat Z$ is given by
\begin{align*}
Z_{\sub} &= Z_{-1} -\frac{1}{2i}\sum\nolimits _j \partial^2 Z_0/\partial x^j\partial\xi_j, \\
Z_{-1} &= A_0Q_{-1} +\ii T.
\end{align*}
We know already how to evaluate the second order derivatives of $Z_0$.
The equation \eqref{Lyapunov-AQminus} for $A_0Q_{-1}$ is
a linear equation of the form \eqref{QstarX-XQ-eq-Y}.
The terms on the right-hand side of \eqref{Lyapunov-AQminus} can be evaluated.
This is also true for the normal derivative $\partial_3 Q_0$.
Indeed, the factorization \eqref{factor-of-acoustic-tensor-x-depend},
and the Ricatti equation for $Z_0$ hold also for small $x^3>0$,
so that differentiation with respect to $x^3$ is possible.
Furthermore, the right-hand side of \eqref{Lyapunov-AQminus} contains
the terms $B_0$ and $B_1$.
These are defined in \eqref{def-B0-B1} and in the displayed formula
after \eqref{eqTractionInNormalCoord}.
Notice that the formulas for $B_0$, $B_1$, and for $T$,
defined in \eqref{def-T}, contain Christoffel symbols.
The curvature of $S$ enters into the transport equation only through these expressions.
Summarizing, we have explained how to evaluate $H(y)W_0(y)$
in \eqref{thm-transport-eqn-Wzero} if, as we can assume,
$W_0(y)$ belongs to the null-space $N(y)$ of the surface impedance tensor.

The transport equation involves the ray field $V(y)$
which depends on the eikonal $\varphi(y)$.
The zeros of the secular equation are simple,
so they can be numerically computed in an efficient way.
Differentiating
\[0=\det Z_0(y,-c_R(y,\eta),\eta)\]
by the chain rule,
the derivatives of $c_R$ are found as algebraic expressions
in the derivatives of $Z_0$.
The solution of the eikonal equation for $\varphi$ can be reduced to the
solution of ordinary differential equations by Hamilton--Jacobi theory.
This also allows the computation of the ray field $V$,
the divergence $\Div V$, and the rays $y(t)$.

\begin{remark}
In the special case of a homogeneous body which fills a half-space,
the tensors $Q_0$ and $Z_0$ are independent of surface points.
So the Poisson bracket vanishes, and so does the subprincipal symbol,
because the Christoffel symbols are zero too.
The scalar transport equation \eqref{scalar-transport-eqn} reduces to
\[
\frac{\intd}{\intd t}\psi+ \big(\frac 1 2 \Div  V +\ii\IM\langle W|
    \frac{\intd}{\intd t} W\rangle\big)\psi=0.
\]
If $W$ is constant, this is the differential equation for the spreading factor.
The imaginary term corresponds to the fact that a choice of $W(y)$ is unique
only up to a phase factor $e^{\ii\alpha}$, $\alpha(y)$ real.
In general, for an inhomogeneous, anisotropic body with curved boundary,
we have seen how to evaluate the coefficient $\langle W|HW\rangle$
of the scalar transport equation numerically.
It is desirable, however, to also have a good understanding of the coefficients
in \eqref{scalar-transport-eqn}.
The real part, if positive, would lead to a damping factor.
The imaginary part gives rise to the Berry phase first observed by Babich
in the early 1960's; see \zitatau{BabichKiselev04}{Babich and Kiselev} and the references therein.
\end{remark}

\appendix
\section{Spectral factorization of positive definite matrix polynomials}

The purpose of this appendix is to state and prove the spectral factorization theorem
for self-adjoint matrix polynomials
which is fundamental to our approach to Rayleigh wave theory.
Refer to \zitatau{GohbergLancRodman82matpolyn}{Gohberg et al.} for a comprehensive treatment
of matrix factorizations.

Let $H$ be a finite-dimensional complex Hilbert space, $\dim H=n$.
Let $A(s)$ be a quadratic polynomial with values in the space of linear operators on $H$,
\[
A(s) =A_0s^2+(A_1+A_1^*)s+A_2.
\]
(We denote the adjoint by a star.)
A number $s\in\C$ is called an eigenvalue of the polynomial if $A(s)$ is singular.
The spectrum of $A$ is the set $\sigma(A)$ of its eigenvalues.
Assume that, in addition, the polynomial is self-adjoint, i.e.,
$A(\bar s)=A(s)^*$ holds for all $s$, and that $A_0$ is positive definite.
If $s$ is an eigenvalue of $A$ then so is $\overline{s}$.
If $A$ has no real spectrum, then $A(s)$ is positive definite for real $s$.

In the following, we abuse language and often call linear operators matrices
despite the fact that we do not fix a basis of $H$.

The following is a special case of Theorem~11.2 in \zitatau{GohbergLancRodman82matpolyn}{Gohberg et al.}.

\begin{proposition}
\label{prop-factor-quad-pol}
Assume that $A(s)$ has no real eigenvalues.
Denote by $\sigma_+$ and $\sigma_-$ the intersection of the spectrum of $A(s)$
with the upper and the lower complex half-plane, respectively.
There is a unique matrix $Q$ with spectrum contained in the upper half-plane such that
\begin{equation}
\label{fact-pos-sa-factorisation}
A(s)=(s-Q^*)A_0(s -Q)
\end{equation}
holds for all $s$.
If $\gamma_+$ is a closed Jordan contour which contains $\sigma_+$ in
its interior and $\sigma_-$ in its exterior, then
\begin{equation}
\label{fact-pos-sa-integral-rep}
Q\oint_{\gamma_+} A(t)^{-1}\intd t = \oint_{\gamma_+} t A(t)^{-1}\intd t
\end{equation}
holds.
The integrals are non-singular.
\end{proposition}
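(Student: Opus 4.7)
The plan is to reduce the factorization problem for $A$ to an invariant-subspace problem for the companion linearization
\begin{equation*}
L = \begin{pmatrix} 0 & I \\ -A_0^{-1}A_2 & -A_0^{-1}(A_1+A_1^*) \end{pmatrix},
\end{equation*}
whose spectrum (with multiplicities) coincides with $\sigma(A)=\sigma_+\cup\sigma_-$. A matrix $Q$ is a right solvent of $A$, i.e.\ satisfies $A_0Q^2+(A_1+A_1^*)Q+A_2=0$, if and only if the graph $\{(x,Qx):x\in H\}$ is $L$-invariant, and then $\sigma(Q)$ equals the spectrum of $L$ restricted to this graph. Let $\Pi_+=(2\pi\ii)^{-1}\oint_{\gamma_+}(sI-L)^{-1}\intd s$ be the Riesz projection onto the invariant subspace $\mathcal{M}_+$ for $\sigma_+$. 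A direct block-matrix inversion yields
\begin{equation*}
(sI-L)^{-1}=\begin{pmatrix}A(s)^{-1}(sA_0+A_1+A_1^*) & A(s)^{-1}A_0\\ -A(s)^{-1}A_2 & sA(s)^{-1}A_0\end{pmatrix},
\end{equation*}
so every entry of $\Pi_+$ is built from $P_+:=(2\pi\ii)^{-1}\oint_{\gamma_+}A(t)^{-1}\intd t$ and $R_+:=(2\pi\ii)^{-1}\oint_{\gamma_+}tA(t)^{-1}\intd t$.

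The first key step is to show $P_+$ is invertible. Because $A$ has no real spectrum and $A(s)^{-1}=O(|s|^{-2})$ at infinity, $\gamma_+$ may be deformed to the real axis plus a large upper semicircle whose contribution vanishes in the limit, giving $\ii P_+=(2\pi)^{-1}\int_{-\infty}^\infty A(t)^{-1}\intd t$. For real $t$, $A(t)$ is self-adjoint with leading coefficient $t^2A_0$ positive definite and no zero on $\R$, hence is positive definite; so is $A(t)^{-1}$, and the integral inherits positive definiteness, making $P_+$ non-singular. A parallel contour-at-infinity argument yields $P_++P_-=0$ and $R_++R_-=A_0^{-1}$, whence $\Pi_++\Pi_-=I$. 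Since $\det A(\bar s)=\overline{\det A(s)}$, the roots of $\det A$ (and hence the eigenvalues of $L$) come in conjugate pairs with equal multiplicities, so $\dim\mathcal{M}_+=\dim\mathcal{M}_-=n$. Reading off $\mathcal{M}_+$ as the range of the second block-column $(P_+A_0,R_+A_0)^T$ of $\Pi_+$ and noting that $P_+A_0$ is bijective, one obtains $\mathcal{M}_+=\{(P_+w,R_+w):w\in H\}$, which is the graph of $Q:=R_+P_+^{-1}$. The relation $QP_+=R_+$ is precisely the integral representation \eqref{fact-pos-sa-integral-rep}. $L$-invariance of the graph then yields $A(Q)=0$ and $\sigma(Q)=\sigma_+\subset\{\IM s>0\}$.

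It remains to upgrade right-solvency to the symmetric factorization \eqref{fact-pos-sa-factorisation}. Matching coefficients in $(sI-Q^*)A_0(sI-Q)=A(s)$ reduces this to the identity $A_0Q+Q^*A_0=-(A_1+A_1^*)$, since then $A_2=Q^*A_0Q$ follows automatically from $A(Q)=0$. Taking the adjoint of $A_0Q^2+(A_1+A_1^*)Q+A_2=0$, using $A_0^*=A_0$ and $A_2^*=A_2$, and subtracting from the original produces, after rearrangement, a Sylvester equation $Q^*\Delta-\Delta Q=0$ for $\Delta:=A_0Q+Q^*A_0+A_1+A_1^*$; the spectra $\sigma(Q)\subset\sigma_+$ and $\sigma(Q^*)\subset\sigma_-$ are disjoint, forcing $\Delta=0$. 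Uniqueness of $Q$ is then immediate: any $\tilde Q$ realizing the factorization with spectrum in the upper half-plane defines an $n$-dimensional graph invariant subspace of $L$ with spectrum in $\sigma_+$, which must coincide with $\mathcal{M}_+$. The main obstacle in this plan is establishing that $\mathcal{M}_+$ has graph form; this depends on the invertibility of $P_+$, which is precisely where the positive definiteness of $A_0$ and the absence of real eigenvalues enter indispensably.
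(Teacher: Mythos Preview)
Your proof is correct and follows essentially the same strategy as the paper's: linearize $A(s)$ via a companion matrix, use the Riesz projector for $\sigma_+$ to produce an $n$-dimensional invariant subspace, show this subspace is a graph by proving $\oint_{\gamma_+}A(t)^{-1}\intd t$ is non-singular (via deformation to the real line and positive definiteness of $A(t)$ there), read off $Q$ from the graph, and finally upgrade right solvency to the self-adjoint factorization.

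Two minor points of divergence are worth noting. First, you work with the standard companion matrix while the paper uses the Stroh form \eqref{factor-Stroh-matrix}; the paper itself remarks that either linearization works, and your explicit block formula for $(sI-L)^{-1}$ plays the role of the paper's identity \eqref{factor-Stroh-resolvent}. Second, to obtain $Q_-=Q^*$ the paper argues by uniqueness (since $A(s)=A(\bar s)^*$, the factorization $(s-Q^*)A_0(s-Q_-^*)$ is another spectral factorization, forcing $Q_-^*=Q$), whereas you derive the Sylvester equation $Q^*\Delta-\Delta Q=0$ for $\Delta=A_0Q+Q^*A_0+A_1+A_1^*$ directly from the solvency equation and its adjoint. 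Your route is slightly more computational but avoids first establishing uniqueness; the paper's is more structural. Both are clean and of comparable length.
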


If the coefficient matrices $A_j$ depend continuously or differentiably
on some parameters, then, in view of \eqref{fact-pos-sa-integral-rep}, so does $Q$.
This follows from the integral formula \eqref{fact-pos-sa-integral-rep}.

We need some preparations before we can give the proof of the propostion.

Polynomial division of $A(s)$ by $s-Q$ gives
$A(s)=(s-Q_-)A_0(s-Q)+F$ with
$Q_-A_0+A_0Q+A_1+A_1^*=0$ and $Q_-A_0Q+F=A_2$.
A simple calculation shows that the remainder $F$ is zero if and only if
\begin{equation}
\label{fact-pos-sa-solvency}
A_0Q^2+(A_1+A_1^*)Q+A_2=0
\end{equation}
holds.
The divisibility criterion \eqref{fact-pos-sa-solvency} is known as the solvency equation.

The Stroh companion matrix of the polynomial $A(s)$ is the
linear operator $N$ on $H^2$ which is given in block form by
\begin{equation}
\label{factor-Stroh-matrix}
N = \begin{bmatrix} -A_0^{-1}A_1&A_0^{-1}\\ -A_2+A_1^*A_0^{-1}A_1& -A_1^*A_0^{-1}\end{bmatrix}.
\end{equation}
A direct calculation proves the identity
\begin{equation}
\label{factor-Stroh-linearization}
\begin{bmatrix}s A_0+A_1^*& I\\ -A_0& 0\end{bmatrix} (s-N)
   = \begin{bmatrix}A(s)& 0\\ -s A_0 -A_1 & I\end{bmatrix},
\end{equation}
where $s=sI$.
Obviously, the spectrum of $A$ equals the spectrum of $N$.
Moreover, $(u,t)^T$ is an eigenvector of $N$ with eigenvalue $s$
if and only if $A(s)u=0$ and $t=(s A_0+A_1)u$ hold.
Set $L = \begin{bmatrix}I&0\end{bmatrix}$ and $ R = \begin{bmatrix} 0& I \end{bmatrix}^T$.
Passing to inverses in \eqref{factor-Stroh-linearization}, we find that
\begin{equation}
\label{factor-Stroh-resolvent}
A(s)^{-1}=L(s-N)^{-1}R
\end{equation}
holds for all complex numbers $s$.
\begin{remark}
We call $N$ the Stroh companion matrix of the polynomial $A(s)$ because,
in elasticity where $n=3$, $N$ equals Stroh's sextic matrix, \zitatau{Stroh62}{Stroh}.
In that setting, and in the main part of the present paper,
$A(s)$ is the acoustic tensor at $\eta+s\nu$,
where $\nu$ is a surface normal and $\eta$ a horizontal propagation direction.
The Stroh matrix is a convenient example of a linearization of $A(s)$,
which means that \eqref{factor-Stroh-resolvent} holds.
The standard companion matrix of $A(s)$ could however also be used for this purpose.
\end{remark}

The holomorphic functional calculus of $N$ assigns to a function $f$,
which is holomorphic in a neighbourhood of the spectrum $\sigma$ of $N$,
the matrix $f(N)=\oint f(s)(s-N)^{-1}\intd s/2\pi \ii$.
The contour of integration must be chosen such that its 
winding numbers around the eigenvalues of $N$ in the support of $f$
are equal to one.
If $\sigma_0$ is a subset of the spectrum of $N$, and if $f_0=1$ and $f_0=0$
in neighbourhoods of $\sigma_0$ and $\sigma\setminus\sigma_0$, respectively,
then $P_0=f_0(N)$ is a projector,
called the Riesz projector associated with $\sigma_0$.

By \eqref{factor-Stroh-resolvent},
if $f$ is holomorphic in a neighbourhood of the spectrum of $N$, then
\begin{equation}
\label{factor-quad-pol:Stroh-resolvent}
\frac{1}{2\pi i}\oint f(t)A(t)^{-1}\intd t = L f(N)R,
\end{equation}
where the contour of integration must be admissable for the functional calculus.

\begin{proof}[Proof of Proposition~\ref{prop-factor-quad-pol}]
Denote by $P_\pm$ the Riesz projector of $N$ associated with $\sigma_\pm$.
Clearly, $P_+P_-=0=P_-P_+$, and $P_++P_-$ is the unit matrix.
The range $Y_\pm$ of $P_\pm$ is an $N$-invariant subspace of $H^2$,
and the direct sum decomposition $H^2=Y_+ \oplus Y_-$ holds. 
Denote by $N_\pm$ the restriction of $N$ to $Y_\pm$.
Clearly, $\sigma(N_\pm)=\sigma_\pm$.

For sufficiently large $\rho>0$,
we denote by $\gamma_\pm$ the Jordan contour which consists of the
segment $[-\rho,\rho]$ and the semicircle $\rho e^{\pm it}$, $0\leq t\leq \pi$.
By \eqref{factor-quad-pol:Stroh-resolvent} we get the identity
\[
\pm L P_\pm R = \frac{1}{2\pi i} \oint_{\gamma_\pm} A(t)^{-1}\intd t
   = \frac{1}{2\pi i} \int_{-\infty}^\infty A(t)^{-1}\intd t.
\]
The last equality follows when letting $\rho\to\infty$.
Since $A(t)^{-1}$ is positive definite for $t$ real, the integral
defines a non-singular matrix.
We infer that $\dim Y_\pm=n$.
Moreover, $LP_+$ has rank $n$, and
there is a unique linear map $L^\sharp:H\to H^2$ which satisfies $L^\sharp LP_+=P_+$.
Clearly, $L^\sharp=P_+ L^\sharp$ has rank $n$.

We prove that the proposition holds with $Q\equiv L NP_+ L^\sharp$.
Observe that $Q^j=L N^jP_+ L^\sharp$.
Using \eqref{factor-quad-pol:Stroh-resolvent}, we find that
\begin{equation}
\label{fact-sa-pol-integrals}
\frac{1}{2\pi i}\oint_{\gamma_+} s^j A(s)^{-1}\intd s
    = L N^j P_+R = Q^j LP_+ R.
\end{equation}
Using Cauchy's integral theorem, we get
\[(A_0 Q^2+(A_1+A_1^*)Q+A_2)LP_+R=0.\]
This implies the solvency condition \eqref{fact-pos-sa-solvency}.
Thus $A(s)=(s-Q_-)A_0(s-Q)$ holds with some matrix $Q_-$.

Observe that $LP_+L^\sharp$ equals unity.
In view of the definitions, this implies that $Q$ is similar to $N_+$.
Thus the spectrum of $Q$ lies in the upper half-plane.
By \eqref{factor-Stroh-linearization},
we have $\det(A_0)\det(s-N)=\det(A(s))$, and therefore
\[\det(s-N_-)\det(s-N_+)=\det(s-Q_-)\det(s-Q).\]
This implies $\det(s-N_-)=\det(s-Q_-)$.
Hence the spectrum of $Q_-$ equals the spectrum $N_-$,
which is contained in the lower half-plane.
This means that the factorization obtained is spectral.

If $A(s)=(s-Q_-)A_0(s -Q)$ holds with the spectra
of $Q$ and $Q_-$ contained in $\sigma_+$ and $\sigma_-$, respectively,
then \eqref{fact-pos-sa-integral-rep} follows by Cauchy's theorem.
The integral on the left-hand side of
\eqref{fact-pos-sa-integral-rep} equals $2\pi i LP_+ R$ which is non-singular.
The uniqueness of the spectral factorization follows.
Since $a$ is self-adjoint, we also have the factorization
$A(s)=(s-Q^*)A_0(s -Q_-^*)$.
Since the spectra of $Q$ and of $Q_-^*$ are contained in the
upper half-plane, this proves that $Q_-=Q^*$.
\end{proof}

\begin{ack}
The author thanks Aleksei Kiselev and Gerardo Mendoza for valuable remarks and suggestions.
\end{ack}

\def\cprime{$'$}

\end{document}